\documentclass[11pt]{article}
\usepackage{fullpage}
\usepackage{hdb_macros}
\usepackage{pgfplots}
\usepackage{enumitem}

\pgfplotsset{compat=1.18}

\newcommand{\mat}[1]{\Mat{#1}}
\newcommand{\Mat}[1]{\mathbf{#1}}
\newcommand{\CRP}{\problem{CRP}}
\newcommand{\GapCRP}{\problem{GapCRP}}
\newcommand{\BinGapCRP}{\problem{BinGapCRP}}
\newcommand{\LinDisc}{\problem{LinDisc}}

\newcommand{\basis}{\mat{B}}
\DeclareMathOperator{\lindisc}{lindisc}
\newcommand{\Par}{\mathcal{P}}
\newcommand{\NAEthreeSAT}{\text{\textsc{NAE}-$3$-\textsc{SAT}}}
\newcommand{\NAEEthreeSAT}{\text{\textsc{NAE}-\textsc{E}$3$-\textsc{SAT}}}
\newcommand{\NAEEfourSAT}{\text{\textsc{NAE}-\textsc{E}$4$-\textsc{SAT}}}
\newcommand{\NAEEkSAT}{\text{\textsc{NAE}-\textsc{E}$k$-\textsc{SAT}}}
\newcommand{\AENAEthreeSAT}{\text{$\forall\exists$-\textsc{NAE}-$3$-\textsc{SAT}}}
\newcommand{\AENAEEthreeSAT}{\text{$\forall\exists$-\textsc{NAE}-\textsc{E}$3$-\textsc{SAT}}}

\newcommand{\CSP}{\problem{CSP}}
\renewcommand{\C}{\mathcal{C}}
\newcommand{\val}{\mathrm{val}}
\renewcommand{\E}{\mathbb{E}}
\newcommand{\var}{\mathrm{var}}
\newcommand{\vars}{\mathrm{vars}}
\newcommand{\sgndiff}{\mathrm{sgndiff}}

\DeclarePairedDelimiter{\parens}{(}{)}

\newcommand{\full}[1]{#1}
\newcommand{\fullornot}[2]{#1}

\title{Hardness of the Binary Covering Radius Problem in Large $\ell_p$ Norms}
\fullornot{
\author{Huck Bennett\thanks{University of Colorado Boulder. \email{huck.bennett@colorado.edu}. Supported in part by NSF Award No. 2432132.} \and Peter Ly\thanks{University of Colorado Boulder. \email{peter.ly@colorado.edu}. Supported in part by NSF Award No. 2432132.}}}
{(Anonymous)}
\date{\today}

\begin{document}

\maketitle

\begin{abstract}
We study the hardness of the $\gamma$-approximate decisional Covering Radius Problem on lattices in the $\ell_p$ norm ($\gamma$-$\GapCRP_p$).
Specifically, we prove that there is an explicit function $\gamma(p)$, with $\gamma(p) > 1$ for $p > p_0 \approx 35.31$ and $\lim_{p \to \infty} \gamma(p) = 9/8$, such that for any constant $\eps > 0$, $(\gamma(p) - \eps)$-$\GapCRP_p$ is $\NP$-hard. %
This shows the first hardness of $\GapCRP_p$ for explicit $p < \infty$.
Work of Haviv and Regev (CCC, 2006 and CJTCS, 2012) previously showed $\Pi_2$-hardness of approximation for $\GapCRP_p$ for all sufficiently large (but non-explicit) finite $p$ and for $p = \infty$.

In fact, our hardness results hold for a variant of $\GapCRP$ called the \emph{Binary Covering Radius Problem} ($\BinGapCRP$), which trivially reduces to both $\GapCRP$ and the decisional Linear Discrepancy Problem ($\LinDisc$) in any norm in an approximation-preserving way.
We also show $\Pi_2$-hardness of $(9/8 - \eps)$-$\BinGapCRP$ in the $\ell_{\infty}$ norm for any constant $\eps > 0$. 

Our work extends and heavily uses the work of Manurangsi (IPL, 2021), which showed $\Pi_2$-hardness of $(9/8 - \eps)$-$\LinDisc$ in the $\ell_{\infty}$ norm.
\end{abstract}

\section{Introduction}
\label{sec:intro}

\begin{figure} \label{fig:hardness-plot}
    \centering
        \begin{tikzpicture} %
        \begin{axis}[
            width=8cm,
            height=6cm,
            xmode=log,
            xmin=1,xmax=10000,
            restrict y to domain=0.9:1.15,
            ymin=0.9,ymax=1.15,
            axis x line=middle,
            axis y line=middle,
            axis line style=->,
            xlabel={\(p\)},
            ylabel={\(\gamma\)},
            legend style={at={(0.75,0.5)}, anchor=west}
            ]
            \addplot[no marks,blue,-] expression[domain=1:10000,samples=200,variable=x]
            {((0.9375*(10*(1/2)^x) + 0.0625*((3/2)^x+9*(1/2)^x))/((4/3)^x+3*(2+(4/3)^x)))^(1/x)}; 
            \addplot[no marks,red,-] expression[domain=1:10000,samples=200,variable=x]
            {9/8};
            \fill[blue,thick](axis cs:{(35,1)})circle[radius=1.5pt];
            \legend{\( \gamma(p) \), \( 9/8 \)}
        \end{axis}
    \end{tikzpicture}
    \caption{{\small A linear-log plot of the approximation factor $\gamma = \gamma(p)$ for which we show $\NP$-hardness of $(\gamma - \eps)$-$\BinGapCRP$ in the $\ell_p$ norm (in {\color{blue}blue}; see \cref{thm:intro-np-hardness}). The function $\gamma(p)$, specified in \cref{eq:np-hardness-factor}, is greater than $1$ for all $p > p_0 = 35.310188...$ and satisfies $\lim_{p \to \infty} \gamma(p) = 9/8$. (The {\color{blue}blue dot} is at the point $(p, \gamma) = (p_0, 1)$, where $p_0$ is the unique value of $p$ such that $\gamma(p) = 1$. The horizontal asymptote of $9/8$ is shown in {\color{red}red}.)
    This factor of $9/8$ also appears in our $\Pi_2$-hardness result for $\BinGapCRP_{\infty}$ (\cref{thm:intro-pi2-hardness}) and in the work of Manurangsi~\cite{manurangsi21} on $\LinDisc$, which is the foundation of our work.}}
    \label{fig:hardness-approx-graph}
\end{figure}

The \emph{covering radius} $\mu(\lat)$ of a lattice $\lat$ is the maximum distance of a point in the span of $\lat$ to a vector in $\lat$. It is one of the most fundamental lattice quantities. Accordingly, the (decisional) Covering Radius Problem ((Gap)$\CRP$) is among the most fundamental lattice problems.
Indeed, it is connected to the main applications of lattices in computer science. Approximate $\GapCRP$ is one of the worst-case problems from which there is a reduction to the average-case problems used in lattice-based cryptography~\cite{journals/siamcomp/Micciancio04}, and it is related to fast algorithms for integer programming (see, e.g.,~\cite{conf/stoc/Dadush19}).

Yet, the hardness of $\GapCRP$ is perhaps the least well understood among major lattice problems.
In particular, it is not even known that \emph{exact} $\GapCRP$ in the $\ell_2$ norm is $\NP$-hard, even though it is plausibly $\NP$-hard to approximate within any constant $\gamma \geq 1$ and $\Pi_2$-hard to approximate to within a small constant $\gamma > 1$.
This stands in contrast to, e.g., the decisional Shortest Vector Problem ($\gamma$-$\GapSVP$), which a long line of elegant work~\cite{DBLP:conf/stoc/Ajtai98,journals/siamcomp/Micciancio00,journals/jcss/Khot06,journals/jacm/Khot05,conf/stoc/RegevR06,journals/toc/HavivR12,journals/toc/Micciancio12} showed to be $\NP$-hard in any $\ell_p$ norm for any constant $\gamma \geq 1$.\footnote{Formally, this $\NP$-hardness is only known under \emph{randomized reductions}, although several works---some of which are very recent~\cite{hair-sahai-detsvp25,hechtsafra2025deterministichsvp}!---have derandomized the reductions in some cases.}

Seminal work of Haviv and Regev~\cite{journals/cjtcs/HavivR12} was the first---and prior to this work, only---to study the hardness of $\GapCRP$. 
(Other work has studied the complexity of $\GapCRP$ more generally, including limitations on its hardness~\cite{guruswami-micciancio-regev,Magazinov-sqrt3CRP}. See \cref{sec:related-work}.)
Let $\gamma$-$\GapCRP_p$ be $\gamma$-approximate $\GapCRP$ in the $\ell_p$ norm.
Haviv and Regev~\cite{journals/cjtcs/HavivR12} showed that for any constant $\eps > 0$, $(3/2 - \eps)$-$\GapCRP_{\infty}$ is $\Pi_2$-hard, and that for all sufficiently large $p$ there exists $\gamma = \gamma(p) > 1$ such that $\gamma$-$\GapCRP_p$ is $\Pi_2$-hard.
However, the cutoff for ``sufficiently large'' in their work is non-explicit and presumably quite high, and so in some sense not much more is known about the hardness of $\GapCRP_p$ for any explicit, finite $p$ than for $p = 2$.

\subsection{Our Results}
\label{sec:our-work}

In this work, we address the problem above by showing $\NP$-hardness of $(\gamma - \eps)$-$\GapCRP_p$ for all $p$ greater than an \emph{explicit} value $p_0 \approx 35.31$ with an explicit approximation factor $\gamma = \gamma(p) > 1$. This is the first new hardness result for $\GapCRP$ since~\cite{journals/cjtcs/HavivR12}, which was first published $20$ years ago.

In fact, we show hardness for a new promise variant of $\gamma$-$\GapCRP_p$ called $\gamma$-$\BinGapCRP_p$ where in the YES case every vector in the fundamental parallelepiped $\Par(\basis) := \basis \cdot [0, 1]^n$ of the input basis $\basis \in \R^{m \times n}$ is close in $\ell_p$ distance to a lattice vector $\basis \vec{x}$ with a \emph{binary} coefficient vector $\vec{x} \in \bit^n$. (In the NO case, there exists a vector in $\Par(\basis)$ that is far from \emph{every} lattice vector $\basis \vec{x}$ for $\vec{x} \in \Z^n$, as with standard $\gamma$-$\GapCRP_p$.)

$\BinGapCRP$ is reminiscent of the ``binary coefficient vector'' variant of the Closest Vector Problem often used in hardness reductions (see, e.g.,~\cite[Definition 7]{GupteV-fine-grained-lin-reg} and~\cite[Definition 3.1]{journals/sigact/Bennett23}), and may be of independent interest.
It has the YES instances of the Linear Discrepancy Problem ($\LinDisc$) and the NO instances of $\GapCRP$. It therefore trivially reduces to both problems in any norm in an approximation-preserving way (see~\cref{sec:prelims-comp-covrad-lindisc} for definitions of these problems).
In particular, our result not only implies new hardness for $\GapCRP_p$ with $p < \infty$, but also for $\LinDisc$ in the $\ell_p$ norm for $p < \infty$ ($\LinDisc$ in finite $\ell_p$ norms does not seem to have been studied before).

Our main result is the following theorem. See also \cref{fig:hardness-approx-graph}.
\begin{theorem}[$\NP$-hardness of approximation for $\BinGapCRP_p$] \label{thm:intro-np-hardness}
Let 
\begin{equation} \label{eq:np-hardness-factor}
\gamma(p) := \Big(\frac{9^p + 159 \cdot 3^p}{8^{p+2} + 96 \cdot 6^p}\Big)^{1/p} \ \text{,}
\end{equation}
and let $p_0 = 35.310188\ldots$ be the unique value of $p \geq 1$ such that $\gamma(p) = 1$.
Then for all \( p > p_0 \) and any constant $\eps > 0$, $(\gamma(p) - \eps)$-$\BinGapCRP_p$ is $\NP$-hard.
\end{theorem}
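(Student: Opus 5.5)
The plan is to give a gap-preserving reduction from a constraint satisfaction problem, adapting Manurangsi's gadget-based reduction for $\LinDisc_\infty$ to the $\ell_p$ setting. $\BinGapCRP$ has a universal YES condition (\emph{every} point of $\Par(\basis)$ is close to some $\basis\vec{x}$ with $\vec{x}\in\bit^n$) and an existential NO condition (\emph{some} point is far from the whole lattice). For $\NP$-hardness rather than $\Pi_2$-hardness, I would not start from $\AENAEthreeSAT$. Instead I would start from a gap version of $\NAEEthreeSAT$ (or $\NAEEfourSAT$ if the constants demand it). Its YES instances are perfectly NAE-satisfiable. Its NO instances have every assignment violating at least a constant fraction $\delta$ of clauses, where $\delta$ is the optimal inapproximability constant. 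The $\eps$ in the statement absorbs the slack from this soundness and from the normalization below.

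\textbf{Construction.} Variables of the formula become columns of $\basis$, possibly with auxiliary columns per clause. Each clause $C$ contributes a small block of rows: a copy of Manurangsi's clause gadget, with entries chosen in $\{\pm 1,\pm 3,\dots\}$-type scales. I would aim for two properties of a single clause gadget $G_C$ (rows $\basis_C$). (i) If the clause's variables are assigned a NAE-satisfying pattern, then for every fractional target $\vec{t}$ restricted to the clause one can pick binary roundings, steered by the assignment, with $\|\basis_C(\vec{t}-\vec{x})\|_p^p$ at most an explicit value $A_p$. (ii) For the specific target $\vec{t}^*=\tfrac12\vec{1}$ (or a similarly symmetric point) and any integer $\vec{x}$, the block contributes at least $B_p$ if the induced pattern on the clause is NAE-satisfying. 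It contributes at least $B'_p > B_p$ if the pattern is not NAE-satisfying.

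Summing over clauses gives YES distance$^p$ $\le m A_p$ for every target, so the covering radius restricted to binary coefficients is small. It also gives NO distance$^p$ $\ge m\bigl((1-\delta)B_p+\delta B'_p\bigr)$. The ratio of the $p$-th roots is $\gamma(p)$. I expect the formula $\bigl(9^p+159\cdot 3^p\bigr)/\bigl(64\cdot 8^p+96\cdot 6^p\bigr)$ to arise exactly as such a weighted average of gadget contributions: the $9^p$ and $8^p$ terms are the dominant coordinates, which reproduce Manurangsi's $9/8$ as $p\to\infty$, and the lower-order terms come from the other coordinates. To use the ratio this way I would fit the gadget scales and the soundness $\delta$ to it.

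\textbf{Steps.} First, fix the gadget and verify the completeness property (i). This means a case analysis over the target's fractional parts within a clause, choosing each variable's rounding direction according to the satisfying assignment (round up or down relative to a threshold determined by the literal's truth value). Second, prove the soundness bound (ii) for $\vec{t}^*$. Any integer $\vec{x}$ induces a Boolean assignment via parity or the nearest-binary rounding of each coordinate. Non-binary integer coordinates only increase the per-row error, so it suffices to lower-bound the contribution of clauses violated under the induced assignment. Third, sum the bounds and take $p$-th roots, checking that the resulting ratio exceeds $1$ precisely when $p>p_0$.

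\textbf{Main obstacle.} I expect the completeness step (i) to be the hardest. In $\ell_\infty$, Manurangsi only needs every row to be small. In $\ell_p$, the sum of $p$-th powers must be controlled simultaneously for \emph{all} targets $\vec{t}$, including adversarial ones. A single worst-case rounding per clause might not achieve $A_p$ across all rows at once. My first attempt would be to choose the rounding per variable via the satisfying assignment, then bound the worst case over $\vec{t}$ by convexity: the objective is convex in $\vec{t}$ on each cell where the rounding is fixed, so the maximum is attained at vertices of cells. That reduces the verification to a finite computation yielding the $8^{p}$ and $6^p$ terms.
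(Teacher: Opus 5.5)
You have the paper's top-level strategy: gap $\NAEEthreeSAT$ with perfect completeness, a Manurangsi-style gadget, target $\frac12\vec{1}$, and showing integer vectors are no better than binary ones. But as written the proposal is a template rather than a proof. The matrix is never fixed, and $\gamma(p)$ is to be ``fit'' afterwards instead of derived, yet the formula encodes specific choices you would have to make and justify.

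The NO-side constant must be soundness $15/16$ with \emph{perfect} completeness; this is where $159 = 15\cdot 10+9$, $96=16\cdot 6$ and $8^{p+2}=16\cdot 4\cdot 8^p$ come from. It is not the ``optimal'' inapproximability constant: the sharper NAE-3SAT constant ($\approx 0.9089$) does not come with perfect completeness, and the completeness side carries a $(1-\eps)2^p$ term that destroys the gap otherwise. This hardness is also not off the shelf; the paper derives it from H\r{a}stad's $(7/8+\eps,1)$-hardness of $\NAEEfourSAT$ via a clause-splitting reduction with $\val(\phi')=\frac12(1+\val(\phi))$.

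The gadget is Manurangsi's \emph{per-variable} one, not a per-clause one: three columns per variable, clause rows $\frac13\mat{B}$ repeated three times, and $\mat{G}\otimes\mat{D}_p$ with $\mat{D}_p$ diagonal with entries $\deg(v_j)^{1/p}$. Completeness then needs no convexity or vertex enumeration. An $\ell_p$ version of Manurangsi's Lemma 3 gives, for each variable separately (hence consistently across clauses, which your per-clause formulation glosses over), a rounding $\vec{z}\in\bit^3$ with three properties: $\norm{\mat{G}(\vec{u}-\vec{z})}_p^p\le 2+(4/3)^p$, $\abs{\vec{1}^T(\vec{u}-\vec{z})}\le 2$, and the sign of $\vec{1}^T(\vec{u}-\vec{z})$ dictated by the assignment. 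This bounds every satisfied clause row by $4/3$. The single-column sign-steered threshold rounding you describe only bounds a satisfied row by $2$, which already exceeds the soundness value $3/2$, so there would be no gap.

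More seriously, your one concrete soundness claim is false: non-binary integer coordinates do \emph{not} only increase the per-row error. This is exactly the step that separates $\BinGapCRP$ from $\LinDisc$. With three copies per variable, an integer $\vec{y}_j$ with $y^{\mathrm{sum}}_j\in\set{1,2}$ (e.g.\ $\vec{y}_j=(1,0,0)$) gives $\frac12-y^{\mathrm{sum}}_j/3=\pm\frac16$. So a clause row can drop from $1/2$ to $1/6$ (satisfied pattern) or from $3/2$ to $1/2$ (violated pattern). Parity rounding also fails, since it does not preserve the sign of $\frac12-y^{\mathrm{sum}}_j/3$.

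What rescues soundness is a per-clause exchange argument with three ingredients.
\begin{itemize}
\item $\mat{G}(\frac12\vec{1}-\vec{z})=\frac12\vec{1}-\mat{G}\vec{z}\in(\Z+\frac12)^3$ and $\lat(\mat{G})\cap\bit^3=\set{\vec{0},\vec{1}}$. Hence any $\vec{z}\notin\set{\vec{0},\vec{1}}$ costs at least $(2+3^p)/2^p$ rather than $3/2^p$ in the gadget.
\item The $\deg(v_j)^{1/p}$ weighting charges that penalty once per clause \emph{occurrence} of $v_j$. Without it, a high-degree variable could lower many clause rows while paying the penalty only once.
\item The threshold rounding $x'_j=1$ iff $y^{\mathrm{sum}}_j\ge 2$, tripled, preserves every $\sgndiff$ pattern. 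It yields per-clause cost $10/2^p$ or $(9+3^p)/2^p$, versus at least $1/6^p+(3/2)^p+8/2^p$ or $(9+3^p)/2^p$ for a non-binary $\vec{y}$, using $2/2^p\le 1/6^p+(3/2)^p$.
\end{itemize}
Only after this can you restrict to binary $\vec{x}$ and read off an assignment satisfying a $\delta$ fraction of clauses.
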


We note that the expression $\gamma(p)$ in \cref{eq:np-hardness-factor} satisfies $\lim_{p \to \infty} \gamma(p) = 9/8$. 
Our second result shows how to achieve $\Pi_2$-hardness (rather than just $\NP$-hardness) of $(9/8 - \eps)$-$\BinGapCRP$ in the $\ell_{\infty}$ norm.

\begin{theorem}[$\Pi_2$-hardness of approximation for $\BinGapCRP_{\infty}$] \label{thm:intro-pi2-hardness}
For any constant $\eps > 0$,
$(9/8 - \eps)$-$\BinGapCRP_{\infty}$ is $\Pi_2$-hard.
\end{theorem}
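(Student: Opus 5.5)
We prove \cref{thm:intro-pi2-hardness} by adapting Manurangsi's reduction~\cite{manurangsi21}, which shows $\Pi_2$-hardness of $(9/8-\eps)$-$\LinDisc_\infty$. His reduction starts from a $\Pi_2$-hard gap version of $\AENAEthreeSAT$ (equivalently, a $\forall\exists$ two-coloring problem on $3$-uniform hypergraphs). Universal variables are encoded by columns whose coefficients range continuously over $[0,1]$. Existential variables and clauses are encoded so that the $\ell_\infty$ discrepancy measures how far each clause is from being ``not-all-equal''. Since $\BinGapCRP$ has the YES instances of $\LinDisc$ and the NO instances of $\GapCRP$, our plan has two parts. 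First, we verify that Manurangsi's YES analysis already gives the binary-rounding guarantee: every target $\basis\vec t$ with $\vec t\in[0,1]^n$ is within $\ell_\infty$ distance $8/9\cdot r$ (after normalization) of some $\basis\vec x$ with $\vec x\in\bit^n$. Second, we strengthen the NO analysis from binary $\vec x$ to arbitrary integer $\vec x\in\Z^n$.

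\textbf{YES case.} Given $\vec t\in[0,1]^n$, we round the universal coordinates as Manurangsi does, for instance by thresholding at $1/2$ or by a rounding that preserves clause sums. This yields a Boolean assignment to the universal variables. We then use the promised satisfying existential assignment as the binary values of the existential coordinates. Clause rows then have value at most $8/9$ in absolute terms, which is exactly his calculation. The only change is bookkeeping: his rounding is already to $\bit$, so nothing new is needed here.

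\textbf{NO case.} This is the main obstacle. We pick a bad universal assignment $a$ and the point $\vec t$ that encodes it, with universal coordinates equal to $a$ and existential coordinates equal to $1/2$, or whatever Manurangsi uses. We must show that every integer $\vec x$ gives distance at least $1-\eps$. Manurangsi only rules out $\vec x\in\bit^n$. For general integers, we add gadget rows to the basis: for each variable $j$ we add a row $c\,(x_j - t_j)$ with a suitable weight $c$, possibly paired with a row that controls $x_j$ relative to $1-x_j$. These rows force every $\vec x$ with some $x_j\notin\bit$, or with $x_j\neq a_j$ on a universal coordinate, to incur $\ell_\infty$ error at least $1$. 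Under $\ell_\infty$ such rows cost nothing in the YES case, as long as $c\,|x_j-t_j|\le 8/9$ whenever $x_j$ is the binary rounding of $t_j$. So we choose $c\approx 16/9$ and place the existential targets at $1/2$. Then any non-binary integer gives $|x_j-t_j|\ge 1/2$ for existentials with target $1/2$, hence error $\ge 8/9$. To reach the full $1$, we would rescale the gadget weights and the universal targets together. We expect the main technical work to be tuning these weights against the $9/8$ gap, so that the gadgets cost at most $8/9$ in YES and at least $1$ in NO.

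Once every $\vec x$ is binary and agrees with $a$ on the universal coordinates, Manurangsi's soundness analysis applies directly. With the universal part fixed to the bad assignment $a$, every binary existential assignment leaves some clause all-equal, and that clause row forces distance at least $1$. Combining the two cases gives a gap of $1/(8/9)=9/8$, up to $\eps$. The reduction runs in polynomial time, so $(9/8-\eps)$-$\BinGapCRP_\infty$ is $\Pi_2$-hard.
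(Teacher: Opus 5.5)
Your outline has the same shape as the paper's argument. The YES case is inherited verbatim from Manurangsi's completeness: his rounding is already binary, so $\lindisc_\infty(\mat{A}')\le 4/3$ is exactly the $\BinGapCRP$ YES condition. The NO case reduces to showing that every \emph{integer} $\vec{x}'$ within $\ell_\infty$ distance $<3/2$ of Manurangsi's bad target is binary, after which his soundness applies.

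That reduction step, which you correctly call the main obstacle, is never carried out. You defer it to ``tuning these weights,'' and the concrete choices you make fail:
\begin{itemize}
\item You budget the new rows $c(x_j-t_j)$ assuming $x_j$ is the \emph{nearest} rounding of $t_j$, so that $|x_j-t_j|\le 1/2$ and $c\approx 16/9$ is affordable.
\item Manurangsi's YES rounding is instead the sign-constrained gadget rounding of \cref{lem:gadget-properties}, which can move a coordinate by nearly $1$. For $\vec{u}=(\eta,\eta,\eta)$ with small $\eta>0$ and $b=-1$, the sign-agreement property (\cref{item:sign-agree}) forces some $z_j=1$, so $|u_j-z_j|=1-\eta$.
\item Rows of weight $16/9$ therefore cost nearly $16/9>8/9$ in the YES case and destroy completeness.
\item Your NO estimate is also off: at target $1/2$ a non-binary integer satisfies $|x_j-1/2|\ge 3/2$, not merely $\ge 1/2$.
\end{itemize}

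The missing idea is that no new rows are needed. Manurangsi's matrix already forces binarity at the targets $\vec{w}'=(\frac12\vec{1},\vec{w}^*)$ with $\vec{w}^*\in\{1/3,2/3\}^{n'}$.
\begin{itemize}
\item Since $\mat{G}\vec{1}=\vec{1}$, every $\vec{z}\in\Z^3$ gives $\mat{G}(\frac12\vec{1}-\vec{z})=\frac12\vec{1}-\mat{G}\vec{z}\in(\Z+\frac12)^3$. This vector has $\ell_\infty$ norm $1/2$ only if $\mat{G}\vec{z}\in\bit^3$; otherwise its norm is at least $3/2$.
\item Any two coordinates of $\mat{G}\vec{z}$ sum to an even number ($2z_1$, $2z_2$, or $2z_3$). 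So $\mat{G}\vec{z}\in\bit^3$ forces $\mat{G}\vec{z}\in\{\vec{0},\vec{1}\}$, hence $\vec{z}\in\{\vec{0},\vec{1}\}$ since $\mat{G}$ is invertible (\cref{lem:gadget-property-soundness}).
\item The $\frac83\mat{I}_{n'}$ block forces $|w_i^*-x_i^*|<9/16$, hence $x_i^*\in\bit$ (\cref{lem:close-to-wprime-linfinity}).
\end{itemize}
Thus $\mu_\infty(\lat(\mat{A}'))<3/2$ yields the binary hypothesis of Manurangsi's soundness lemma. With $r=4/3$ this gives the gap $(3/2)/(4/3)=9/8$.

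Your row idea can be salvaged, but only with a different weight. In unnormalized units (YES threshold $4/3$, NO threshold $3/2$), append $c\,\mat{I}$ with $c\in[9/8,4/3]$. In the YES case these rows cost at most $c\le 4/3$ for any binary rounding. At targets in $\{1/3,1/2,2/3\}$ they cost at least $c\cdot\frac43\ge 3/2$ for any non-binary integer. That is the analysis your proposal would need to supply.
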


\cref{thm:intro-pi2-hardness} is a strengthening of the main result in \cite{manurangsi21}---it shows hardness not just of $(9/8 - \eps)$-$\LinDisc_{\infty}$ but of $(9/8 - \eps)$-$\BinGapCRP_{\infty}$, where $\LinDisc_{\infty}$ is the Linear Discrepancy Problem in the $\ell_{\infty}$ norm.
\cref{thm:intro-pi2-hardness} also gives an alternative proof of $\Pi_2$-hardness of approximation for $\gamma$-$\GapCRP_{\infty}$, albeit with a smaller approximation factor than the one achieved by~\cite{journals/cjtcs/HavivR12}.

\subsection{Related Work}
\label{sec:related-work}

Early work of McLoughlin~\cite{mcloughlin} showed that the Covering Radius Problem on error-correcting codes is \( \Pi_2 \)-hard by a reduction from a \( \Pi_2 \)-hard variant of the $3$-Dimensional Matching Problem, which is in turn shown to be \( \Pi_2 \)-hard by a reduction from $\Pi_2$-$3$-SAT.%
\footnote{Here and throughout this section we refer to linear error-correcting codes simply as ``codes.''}
Twenty years later, work of Guruswami, Micciancio, and Regev~\cite{guruswami-micciancio-regev} studied the complexity of the Covering Radius Problem both on codes and lattices.
It showed a number of results, including that the Covering Radius Problem on codes is $\NP$-hard to approximate to within any constant and $\Pi_2$-hard to approximate to within some constant. 
(It did not show hardness results for $\GapCRP$ on lattices.)

The work~\cite{guruswami-micciancio-regev} also showed that $2$-$\GapCRP_p$ for any $p$ is in $\AM$ via an elegant protocol. 
The fact that $2$-$\GapCRP_p$ is in $\AM$ provides evidence that $\gamma$-$\GapCRP_p$ for $\gamma \geq 2$ is unlikely to be $\Pi_2$-hard, since if it were the polynomial hierarchy would collapse.
However, being in $\AM$ says nothing about the potential $\NP$-hardness of these problems. In particular it could very well be the case that $\gamma$-$\GapCRP$ is $\NP$-hard for every constant $\gamma \geq 1$ as is the case for the Covering Radius Problem on codes.

The factor of $\gamma = 2$ in the~\cite{guruswami-micciancio-regev} $\AM$ protocol follows generically for $\GapCRP$ in any norm from the triangle inequality, but it is not always tight.
Follow-up work by Haviv, Lyubashevsky, and Regev~\cite{journals/dcg/HavivLR09} studied the distribution of the (normalized) $\ell_p$ distance of a random point in a fundamental domain of a lattice to the lattice.
This distribution is closely related to the~\cite{guruswami-micciancio-regev} $\AM$ protocol, and~\cite{journals/dcg/HavivLR09} showed that $\gamma$-$\GapCRP_2$ for $\gamma > \sqrt{3} \approx 1.732$ is in $\AM$ under a certain conjecture about this distribution in the $\ell_2$ norm.
Their conjecture was subsequently proved by Magazinov~\cite{Magazinov-sqrt3CRP}. %

Understanding the worst-case hardness of $\gamma$-$\GapCRP$ is also motivated by its relevance to cryptography. Indeed,~\cite{journals/siamcomp/Micciancio04} showed that $\gamma$-$\GapCRP$ with sufficiently large polynomial $\gamma = \gamma(n)$ can be taken as the starting worst-case problem for a worst-case to average-case reduction to the problems used as the foundation of lattice-based cryptography.

\paragraph{The Haviv-Regev hardness reduction.}
The work of Haviv and Regev~\cite{journals/cjtcs/HavivR12} is the only prior work that studied the worst-case hardness of $\GapCRP$ on lattices. As noted above, they showed that for every constant $\eps > 0$, $(3/2 - \eps)$-$\GapCRP_{\infty}$ is $\Pi_2$-hard and that for all sufficiently large $p$ there exists $\gamma = \gamma(p)$ such that $\gamma$-$\GapCRP_p$ is $\Pi_2$-hard to approximate. However, the cutoff for ``sufficiently large'' and the approximation factors $\gamma(p)$ for $p < \infty$ are non-explicit.

The reduction in~\cite{journals/cjtcs/HavivR12} is from the Group Coloring Problem on graphs with respect to the group $\Z_3$, a problem that is $\Pi_2$-hard.\footnote{Their reduction actually works with $\Z_q$ for any $q \geq 3$, but they achieve the best hardness of approximation with $q = 3$.}
The problem is defined as follows. The input is a directed graph $G = (V, E)$. The goal is to decide whether for every assignment $\phi : E \to \Z_3$ of labels to edges, there exists a coloring $c : V \to \Z_3$ of vertices such that for every edge $e = (u, v) \in E$, $c(u) - c(v) \neq \phi(e)$. The reduction to $\GapCRP$ is as follows. Let $A = A(G) \in \Z^{m \times n}$ for $m := \card{E}$, $n := \card{V}$ be the incidence matrix of $G$. That is, $A$ is the matrix with two non-zero entries per row defined as
\[
A_{i,j} := \begin{cases}
-1 & \text{if edge $e_i = (v_j, u)$ for some $u \in V$,} \\
1 & \text{if edge $e_i = (u, v_j)$ for some $u \in V$,} \\
0 & \text{otherwise.}
\end{cases}
\]
The reduction then outputs (a basis of) the lattice $\lat := A \cdot \Z^n + 3 \Z^m$.
One may also view $A$ as the generator matrix of a ternary code $\C \subseteq \F_3^m$, and $\lat$ as the Construction-A lattice $\C + 3\Z^m$.

To show $\NP$-hardness of $\gamma$-$\GapCRP_p$ in some $\ell_p$ norm (for large $\gamma$), one might try to reduce from variants of the $3$-Coloring Problem, which is the $\NP$ analog of the $\Pi_2$-complete $\Z_3$-Group Coloring Problem.
Indeed, strong hardness of approximation results are known for Almost-$3$-Coloring---the optimization variant of $3$-Coloring in which the goal is to minimize the number of monochromatic edges---assuming variants of the Unique Games Conjecture~\cite{journals/siamcomp/DinurMR09}.
However, although an analog of the soundness argument in~\cite{journals/cjtcs/HavivR12} holds for (Almost-)$3$-Coloring, it is unclear how to show an analog of the completeness argument.

\paragraph{Combinatorial and linear discrepancy.}
The (combinatorial) \emph{discrepancy} of a set system $S_1, \ldots, S_m \subseteq \set{x_1, \ldots, x_n}$ is defined as $\min_{x_1, \ldots, x_n \in \pmo} \max_{i \in [m]} \abs{\sum_{j \in S_i} x_j}$, and the goal of the \emph{combinatorial discrepancy problem} is, given the set system as input, to find $x_1, \ldots, x_n \in \pmo$ that achieve this minimum.
Equivalently, if $\mat{A} \in \bit^{m \times n}$ is the incidence matrix of the input set system, the goal is to find $\vec{x} \in \pmo^n$ that minimizes $\norm{\mat{A} \vec{x}}_{\infty}$.
Combinatorial discrepancy and other variants of discrepancy are used widely in computer science and math (see, e.g.,~\cite{panorama-discrepancy-book}).

\emph{Linear discrepancy} was originally defined and related to other notions of discrepancy in work of Lov\'{a}sz, Spencer, and Vestergombi~\cite{LovaszSV/discrepancy86}, and it appears as a key ingredient in certain approximation algorithms (see, e.g.,~\cite{journals/talg/EisenbrandPR13,journals/siamcomp/Rothvoss16,conf/soda/HobergR17}).
Formally, we define the linear discrepancy $\lindisc(\basis)$ of a matrix $\basis$ in the $\ell_p$ norm as\footnote{``Standard'' linear discrepancy corresponds to $\lindisc_{\infty}$, and, to the best of our knowledge, our work is the first to extend the definition to finite $\ell_p$ norms. This extension is useful for our work.}
\begin{equation} \label{eq:lin-disc-intro}
\lindisc_p(\basis) := \max_{\vec{w} \in [0, 1]^n} \min_{\vec{x} \in \bit^n} \norm{\basis (\vec{w} - \vec{x})}_p \ \text{.}
\end{equation}
This definition is syntactically very similar to the definition of the covering radius in the $\ell_p$ norm, $\mu_p$.
The difference is that in the definition of $\mu_p$, the minimum is taken over $\vec{x} \in \Z^n$ rather than $\vec{x} \in \bit^n$. See \cref{sec:lattices-lin-disc,sec:prelims-comp-covrad-lindisc}.

Combinatorial discrepancy is captured, up to translation and scaling, by fixing $\vec{w} := \half \cdot \vec{1}$ in \cref{eq:lin-disc-intro} instead of taking the max over all $\vec{w} \in [0, 1]^n$ there.
Linear discrepancy was first investigated from a complexity perspective by Li and Nikolov~\cite{Li-Nikolov}, who provided both algorithms and hardness results. 
In particular, they showed that the exact Linear Discrepancy Problem in the $\ell_{\infty}$ norm, $\LinDisc_{\infty}$, is $\NP$-hard and that it is contained in $\Pi_2$.
Their hardness result was subsequently improved by Manurangsi~\cite{manurangsi21}, who showed that $\gamma$-$\LinDisc_{\infty}$ is $\Pi_2$-hard for $\gamma := 9/8 - \eps$ for any constant \( \eps > 0 \). Our work makes extensive use of~\cite{manurangsi21}.

\subsection{Technical Overview}
\label{sec:techniques}

At a high level, our hardness reduction for $\BinGapCRP_p$ follows the hardness reduction for $\LinDisc_{\infty}$ in~\cite{manurangsi21}.
The reduction in~\cite{manurangsi21} reduces from $\NAEthreeSAT$, the constraint satisfaction problem in which each constraint is a function of three literals (Boolean variables $v_j$ or their negations $\lnot v_j$), and a constraint is satisfied by an assignment to its variables if and only if the assignment neither satisfies none of its literals nor all of its literals.

Specifically, on input a $\NAEthreeSAT$ formula $\phi(v_1, \ldots, v_n)$ with $n$ Boolean variables and with $m$ constraints, the $\NP$-hardness reduction in~\cite{manurangsi21} first computes the constraint-variable incidence matrix $\mat{B} = \mat{B}(\phi) \in \Z^{m \times n}$ with entries\footnote{In fact,~\cite{manurangsi21} reduces from the $\NAEthreeSAT$ (rather than $\NAEEthreeSAT$) problem, in which constraints may have two literals rather than three. The reduction simply doubles an arbitrary literal in any constraint with two literals, which leads to a matrix $\mat{B}$ potentially containing entries from $\set{-2, 2}$ as well as $\set{-1, 0, 1}$. In this work, we reduce from $\NAEEthreeSAT$.}
\[
    B_{i,j} :=
    \begin{cases}
        1 & \text{if constraint \( i \) contains \( v_j \) ,} \\
        -1 & \text{if constraint \( i \) contains \( \lnot v_j \) ,} \\
        0 & \text{otherwise .}
    \end{cases}
\]
It then outputs a $\LinDisc_{\infty}$ instance with matrix
\begin{equation} \label{eq:matA-intro}
    \mat{A} := \begin{pmatrix}
        \frac{1}{3}\mat{B} & \frac{1}{3}\mat{B} & \frac{1}{3}\mat{B} \\
        & \mat{G} \otimes \mat{I}_n
    \end{pmatrix}
    \in \Q^{(m+3n) \times 3n} \ \text{,}
\end{equation}
where $\mat{G}$ is the $3 \times 3$ gadget matrix
\[
\mat{G} := \begin{pmatrix}
1 & 1 & -1 \\
1 & -1 & 1 \\
-1 & 1 & 1
\end{pmatrix} \ \text{,}
\]
$\otimes$ is the Kronecker product and $\mat{I}_n$ is the $n \times n$ identity matrix.

Both the completeness and soundness analysis of the reduction in~\cite{manurangsi21} crucially use properties of $\mat{G}$.
For example,~\cite[Lemma 3]{manurangsi21} argues that $\mat{G}$ has low discrepancy by upper bounding the distance between $\mat{G} \vec{u}$ and $\mat{G} \cdot \bit^3$ for each $\vec{u} \in [0, 1]^3$ via a case analysis on the $\ell_1$ norm of $\vec{u}$.
The soundness analysis shows that, if there exists a close vector in the set $\mat{A} \cdot \bit^{3n}$ to $\mat{A} \cdot (\frac{1}{2} \vec{1})$, then $\phi$ must be satisfiable.
\cite{manurangsi21} also shows $\Pi_2$-hardness of $(9/8 - \eps)$-$\LinDisc_{\infty}$ by outputting and analyzing a matrix $\mat{A}' = \mat{A}'(\phi)$ defined in terms of $\phi$ that is related to the matrix $\mat{A}$ defined in \cref{eq:matA-intro} but more complicated. We use this same matrix $\mat{A}'$ (defined in \cref{eqn:def-matAprime}) to prove \cref{thm:intro-pi2-hardness}.

At a technical level, our $\NP$-hardness result adapts the machinery in~\cite{manurangsi21} from $\ell_{\infty}$ to general $\ell_p$ norms, and shows that, if the $\NAEEthreeSAT$ formula $\phi$ is unsatisfiable, then not only is $\lindisc(\mat{A})$ large, in fact $\mu(\lat(\mat{A}))$ is large.
That is, we show that if $\phi$ is unsatisfiable then there exists $\vec{w} \in [0, 1]^{3n}$ (in fact, $\vec{w} := \frac{1}{2} \cdot \vec{1}$) such that $\norm{\mat{A}(\vec{w} - \vec{x})}_p$ is large not only for all $\vec{x} \in \bit^{3n}$ but for all $\vec{x} \in \Z^{3n}$.
Doing these things requires making a number of modifications to the hardness reduction and analysis used by \cite{manurangsi21}, such as
\begin{enumerate}

\item \label{item:nae3sat-hardness} Using \emph{hardness of approximation} for $\NAEEthreeSAT$ with soundness $15/16 + \eps$ and perfect completeness; see \cref{thm:nae-e3-sat-hard}. Although this result is likely folklore, we did not find a suitable reference for it and give a proof in \cref{sec:nae-e3-sat-hardness}.

\item Taking the Kronecker product of $G$ with the ``$\ell_p$ norm degree'' diagonal matrix
\[
    \mat{D}_p := \mat{D}_p(\phi) = 
    \diag(\deg(v_1)^{1/p}, \ldots, \deg(v_n)^{1/p}) \in \R^{n \times n}
\]
in the definition of $\mat{A}$ instead of just taking the Kronecker product with $I_n$.
Here $\deg(v_i)$ counts the number of occurrences of the variable $v_i$ and its negation $\lnot v_i$ in the formula $\phi$.

\item \label{item:G-analysis-gen} 
Extending the analysis from~\cite{manurangsi21} for the gadget matrix $\mat{G}$. For example, we prove an $\ell_p$ analog of the fact that $\lindisc(\mat{G})_{\infty}$ is small (and in fact, that something stronger is true) in \cref{lem:gadget-properties}, and we observe that $\mat{G} \cdot \vec{0} = \vec{0}$ and $\mat{G} \cdot \vec{1} = \vec{1}$ are the only closest vectors in $\lat(\mat{G}) = \mat{G} \cdot \Z^3$ to $\mat{G} \cdot (\half \vec{1})$ (and not just the only closest vectors in $\mat{G} \cdot \bit^3$ to $\mat{G} \cdot (\half \vec{1})$) in \cref{lem:gadget-property-soundness}.
\end{enumerate}

For \cref{item:nae3sat-hardness}, we note that the soundness-to-completeness ratio of $15/16$ in \cref{thm:nae-e3-sat-hard} is \emph{not} tight. In fact, recent work of Brakensiek, Huang, Potechin, and Zwick~\cite{brakensiek-huang-potechin-zwick} shows tight upper and lower bounds for $\NAEEthreeSAT$ with a lower (i.e., better) soundness-to-completeness ratio of roughly $0.9089 < 15/16 = 0.9375$.
However, \cref{thm:nae-e3-sat-hard} shows hardness of $(\delta, \eps)$-$\NAEEthreeSAT$ \emph{with perfect completeness} (i.e., with $\eps = 1$), whereas the result of~\cite{brakensiek-huang-potechin-zwick} does not.
This is important since our main hardness reduction for $\BinGapCRP_p$ requires reducing from $(\delta, \eps)$-$\NAEEthreeSAT$ with nearly perfect completeness.

\subsection{Open Questions}
\label{sec:open-questions}

We again emphasize the central open question that we consider in this work: For which values of $\gamma = \gamma(p)$ is $\gamma$-$\GapCRP$ in the $\ell_p$ norm $\NP$-hard, and for which is it $\Pi_2$-hard? In particular, we reiterate that the problem of showing $\NP$-hardness of exact $\GapCRP$ in the $\ell_2$ norm (the most important special case) remains open.
We again note that the $\AM$ protocol of~\cite{guruswami-micciancio-regev} for approximate $\GapCRP$ (conditionally) rules out $\Pi_2$-hardness for $\gamma$-$\GapCRP_p$ for $\gamma \geq 2$, but says nothing about $\NP$-hardness of $\gamma$-$\GapCRP_p$.
Manurangsi~\cite{manurangsi21} notes that no such analogous $\AM$ protocol is known for $\LinDisc$, and asks if one exists.

\full{\subsection{Acknowledgements} We thank Ishay Haviv, Pasin Manurangsi, Daniele Micciancio, and Noah Stephens-Davidowitz for helpful comments and references.}

\section{Preliminaries}
\label{sec:prelims}

We will use boldface, lowercase letters to denote vectors. We will sometimes abuse notation when defining column vectors in terms of other column vectors by writing things like $\vec{x} := (\vec{y}, \vec{z})$ to mean $\vec{x} := (\vec{y}^T, \vec{z}^T)^T$. We will use $\vec{0}$ and $\vec{1}$ to denote the all-$0$s and all-$1$s vectors, respectively.

We define the predicate $\sgndiff(\vec{x}, \vec{y})$ on vectors $\vec{x}, \vec{y} \in \R^n$ to be true if and only if the Hadamard (entry-wise) product $\vec{x} \odot \vec{y} \in \R^n$ has at least one non-positive coordinate and at least one non-negative coordinate.
We will use the simple fact that if
$\sgndiff(\vec{x}, \vec{y})$ holds and if $\max_i \abs{x_i y_i} \leq r$ for some $r \geq 0$, then $\abs{\iprod{\vec{x}, \vec{y}}} = \abs{\sum_{i=1}^n x_i y_i} \leq r(n-1)$.

\subsection{Covering Radius and Linear Discrepancy}
\label{sec:lattices-lin-disc}

We call a matrix $\basis = (\vec{b}_1, \ldots, \vec{b}_n) \in \R^{m \times n}$ with full column rank a \emph{basis}. In particular, $m \geq n$ for a basis.
The lattice $\lat(\basis)$ generated by $\basis$ is defined as
\[
\lat(\basis) := \big\{\sum_{i=1}^n a_i \vec{b}_i : a_1, \ldots, a_n \in \Z\big\} \ \text{.}
\]
For $p \geq 1$, the $\ell_p$ \emph{covering radius} of a lattice $\lat \subset \R^m$ is defined as
\[
\mu_p(\lat) := \max_{\vec{t} \in \lspan(\lat)} \dist_p(\vec{t}, \lat) \ \text{,}
\]
where for a discrete set $S$, we define $\dist_p(\vec{t}, S) := \min_{\vec{y} \in S} \norm{\vec{t} - \vec{y}}_p$.
Because $\dist_p(\vec{t} + \vec{y}, \lat) = \dist_p(\vec{t}, \lat)$ for any $\vec{t} \in \R^m$ and $\vec{y} \in \lat$, we have that if $\lat = \lat(\basis)$ for a basis $\basis \in \R^{m \times n}$, 
\begin{equation} \label{eq:covrad-prelims}
\mu_p(\lat) 
= \max_{\vec{w} \in \R^n} \min_{\vec{x} \in \Z^n} \norm{\basis (\vec{w} - \vec{x})}_p
= \max_{\vec{w} \in [0, 1]^n} \min_{\vec{x} \in \Z^n} \norm{\basis (\vec{w} - \vec{x})}_p \ \text{.}
\end{equation}
That is, because the distance of a vector $\basis \vec{w}$ to $\lat(\basis)$ is invariant under shifts by lattice vectors (i.e., vectors $\basis \vec{x}$ for $\vec{x} \in \Z^n$), it suffices to consider the maximum distance of vectors $\basis \vec{w}$ in the fundamental parallelepiped $\Par(\basis) := \basis \cdot [0, 1]^n$ to the lattice.

We will also work with \emph{linear discrepancy}, a quantity closely related to the covering radius that was first defined in~\cite{LovaszSV/discrepancy86}. For $p \geq 1$, the linear discrepancy $\lindisc(\basis)$ of a matrix $\basis$ is defined as\footnote{The definition of $\lindisc(\basis)$ does not require $\basis$ to be a basis. I.e., $\basis$ is allowed to have linearly dependent columns.}
\begin{equation} \label{eq:lindisc-prelims}
\lindisc_p(\basis) := \max_{\vec{w} \in [0, 1]^n} \min_{\vec{x} \in \bit^n} \norm{\basis (\vec{w} - \vec{x})}_p \ \text{.}
\end{equation}

\paragraph{Covering radius versus linear discrepancy.}
We note that the equivalent formulation of $\mu_p(\lat)$ on the right-hand side of \cref{eq:covrad-prelims} and the definition of $\lindisc(\basis)$ are syntactically very similar. The only difference is that the minimum in \cref{eq:covrad-prelims} is taken over $\vec{x} \in \Z^n$ (which corresponds to the entire lattice $\lat(\basis) = \set{\basis \vec{x} : \vec{x} \in \Z^n}$) whereas in \cref{eq:lindisc-prelims} the minimum is only taken over $\vec{x} \in \bit^n$ (which corresponds to \emph{binary} linear combinations $\basis \vec{x}$ of basis vectors).

Despite the similarity of the definitions of $\mu_p$ and $\lindisc_p$, we emphasize that they capture different quantities.
In particular, the closest vector in the lattice $\lat(\basis) := \basis \cdot \Z^n$ to a vector $\basis \cdot \vec{w}$ for $\vec{w} \in [0, 1]^n$ need not be a vector in $\basis \cdot \bit^n$. 
 We give a simple example of this for the lattice $\Z^2$ when the input basis $\basis$ consists of rather long vectors.
\begin{example}
    Consider the basis
    \[
    \basis := \begin{pmatrix}
    3 & 4 \\
    1 & 1
    \end{pmatrix} \ \text{,}
    \]
    and note that $\lat(\basis) = \Z^2$ since $\det(\basis) = -1$ (i.e., $\basis$ is unimodular).
    Let $\vec{w} := (1, 1/2)^T$, and note that $\basis \vec{w} = (5, 3/2)^T$. Then
    \[
    \dist_p(\basis \vec{w}, \lat(\basis)) = \dist_p(\basis \vec{w}, \Z^2) = \norm{\basis((3, -1)^T - \vec{w})}_p = 1/2 \ \text{,}
    \]
    but one can check that $\norm{\basis (\vec{x} - \vec{w})}_p > 1/2$ for all $\vec{x} \in \bit^2$.
\end{example}

\subsection{Covering Radius and Linear Discrepancy as Computational Problems}
\label{sec:prelims-comp-covrad-lindisc}

We next define the approximate decisional Covering Radius Problem in the $\ell_p$ norm, $\gamma$-$\GapCRP$, and the closely related approximate Linear Discrepancy Problem in the $\ell_p$ norm, $\gamma$-$\LinDisc_p$. %

\begin{definition} \label{def:gapcrp}
Let $p \in [1, \infty]$ and let $\gamma = \gamma(n) \geq 1$.
    The \emph{$\gamma$-approximate decisional Covering Radius Problem in the $\ell_p$ norm} ($\gamma$-$\GapCRP_p$), is the problem where, on input a basis $\basis \in \Q^{m \times n}$ and a value $r > 0$, the goal is to distinguish between the following two cases.

    \begin{itemize}
    \item (YES instance.)  $\mu_p(\lat(\basis)) \leq r$.
    
    \item (NO instance.)  $\mu_p(\lat(\basis)) > \gamma r$.
    \end{itemize}
\end{definition}

\begin{definition} \label{def:lindisc}
Let $p \in [1, \infty]$ and let $\gamma = \gamma(n) \geq 1$.
    The \emph{$\gamma$-approximate decisional Linear Disrepancy Problem in the $\ell_p$ norm} ($\gamma$-$\LinDisc_p$) is the problem where, on input a basis $\basis \in \Q^{m \times n}$ and a value $r > 0$, the goal is to distinguish between the following two cases.

    \begin{itemize}
    \item (YES instance.) $\lindisc_p(\basis) \leq r$.
    
    \item (NO instance.) $\lindisc_p(\basis) > \gamma r$.
    \end{itemize}
\end{definition}

Finally, we define the binary decisional Covering Radius Problem $\BinGapCRP$, which has the YES instances of $\LinDisc$ and the NO instances of $\GapCRP$.

\begin{definition} \label{def:bin-gap-crp}
    Let $p \in [1, \infty]$ and let $\gamma = \gamma(n) \geq 1$.
    The \emph{$\gamma$-approximate decisional Binary Covering Radius Problem in the $\ell_p$ norm} ($\gamma$-$\BinGapCRP_p$) is the problem where, on input a basis $\basis \in \Q^{m \times n}$ and a value $r > 0$, the goal is to distinguish between the following two cases.
    \begin{itemize}
    \item (YES instance.) $\lindisc_p(\basis) \leq r$.
    
    \item (NO instance.)  $\mu_p(\lat(\basis)) > \gamma r$.

    \end{itemize}
\end{definition}

We note that $\mu_p(\lat(\basis)) \leq \lindisc_p(\basis)$ for any $p \in [1, \infty]$ and any basis $\basis$.
Therefore, for any $p \in [1, \infty]$ and $\gamma \geq 1$, YES (respectively, NO) instances of $\gamma$-$\BinGapCRP_p$ are YES (respectively, NO) instances of both $\gamma$-$\GapCRP_p$ and $\gamma$-$\LinDisc_p$. We therefore immediately have the following observation.

\begin{fact} \label{fct:bingapcrp-to-crp-lindisc}
For any $\gamma \geq 1$ and $p \in [1, \infty]$, the identity mapping $(\basis, r) \mapsto (\basis, r)$ is a (polynomial-time) reduction from $\gamma$-$\BinGapCRP_p$ to $\gamma$-$\GapCRP_p$ and to $\gamma$-$\LinDisc_p$.
\end{fact}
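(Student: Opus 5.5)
The fact follows from the inequality $\mu_p(\lat(\basis)) \leq \lindisc_p(\basis)$ together with the observation that the identity map is computable in polynomial time. I would first establish the inequality and then check the YES and NO cases for each of the two target problems.

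For the inequality, fix any $\vec{w} \in [0,1]^n$. Since $\bit^n \subseteq \Z^n$, taking a minimum over the larger set can only decrease the value:
\[
\min_{\vec{x} \in \Z^n} \norm{\basis(\vec{w}-\vec{x})}_p \leq \min_{\vec{x} \in \bit^n} \norm{\basis(\vec{w}-\vec{x})}_p \ \text{.}
\]
Taking the maximum over $\vec{w} \in [0,1]^n$ on both sides, the left side becomes $\mu_p(\lat(\basis))$ by the second equality in \cref{eq:covrad-prelims}, and the right side is $\lindisc_p(\basis)$ by \cref{eq:lindisc-prelims}. The maxima and minima are attained: the minima are over discrete sets, and the resulting functions of $\vec{w}$ are continuous on the compact set $[0,1]^n$. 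So the definitions are well posed as stated. For $p=\infty$ the argument is the same.

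Next I would check the reduction to $\gamma$-$\LinDisc_p$. A YES instance of $\gamma$-$\BinGapCRP_p$ satisfies $\lindisc_p(\basis)\le r$, which is exactly the YES condition of $\gamma$-$\LinDisc_p$. A NO instance satisfies $\mu_p(\lat(\basis))>\gamma r$, and by the inequality $\lindisc_p(\basis)\ge\mu_p(\lat(\basis))>\gamma r$, which is a NO instance of $\gamma$-$\LinDisc_p$.

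Finally I would check the reduction to $\gamma$-$\GapCRP_p$. A YES instance gives $\mu_p(\lat(\basis))\le\lindisc_p(\basis)\le r$, which is the YES condition of $\gamma$-$\GapCRP_p$. A NO instance is verbatim a NO instance of $\gamma$-$\GapCRP_p$. The identity map is trivially polynomial time, and the input format (a rational basis together with $r>0$) is the same for all three problems. None of these steps is a real obstacle; the only point needing care is the use of the parallelepiped form of $\mu_p$ from \cref{eq:covrad-prelims}, which the paper has already justified.
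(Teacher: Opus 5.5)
Your proposal is correct and follows the paper's argument: the paper also notes that $\mu_p(\lat(\basis)) \leq \lindisc_p(\basis)$, which holds because minimizing over $\Z^n$ can only do better than minimizing over $\bit^n$. From this it concludes that YES (respectively, NO) instances of $\gamma$-$\BinGapCRP_p$ are YES (respectively, NO) instances of both $\gamma$-$\GapCRP_p$ and $\gamma$-$\LinDisc_p$, and your write-up just spells out these four case checks explicitly.
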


\subsection{Constraint Satisfaction Problems}
\label{sec:prelims-csps}

A (Boolean) \emph{constraint satisfaction problem} (CSP) is defined by a set of Boolean variables \( X = \set{x_1, \dots, x_n} \), and a set of functions called \emph{constraints} \( \mathcal{C} = \set{C_1, \dots, C_m} \) with \( C_i \colon \bit^{\abs{C_i}} \to \bit \). An \emph{assignment} of the variables is a map from \( X \to \bit \). A constraint \( C_i \) is \emph{satisfied} by an assignment if the output of the constraint on the assignment of the variables is \( 1 \).
Let \( \phi \) be a CSP and let \( \psi \) be an assignment of the variables in \( \phi \). We define \( \val_{\psi}(\phi) \) to be the fraction of constraints in \( \phi \) satisfied by an assignment \( \psi \), and define the \emph{value} of $\phi$ to be \( \val(\phi) := \max_{\psi} \val_{\psi}(\phi) \).
We then define the computational problem $(\delta, \eps)$-$\CSP$ as follows.

\begin{definition}[Constraint Satisfaction Problems (CSP)]
    For $\delta, \eps \in \R$ be such that $0 < \delta \leq \eps \leq 1$,
    $(\delta, \eps)$-$\CSP$ is the decision problem defined as follows. 
    On input a collection of constraints $\C = \set{C_1, \ldots, C_m}$, the goal is to decide between the follows two cases.
    \begin{itemize}
    \item (YES instance.) $\val(\phi) \geq \eps$.
    \item (NO instance.) $\val(\phi) < \delta$.
    \end{itemize}
\end{definition}

We now introduce $\NAEEkSAT$, a $k$-ary CSP that will be important for our hardness reductions. 

\begin{definition}[\( (\delta, \eps)\)-$\NAEEkSAT$ problem]
    A $\NAEEkSAT$ formula $\phi$ consists of $n$ Boolean variables $X = \set{x_1, \ldots, x_n}$ and $m$ constraints $C_1, \ldots, C_m$, where each constraint $C_i$ is a function of exactly $k$ distinct literals (Boolean variables $x_j$ or their negations $\lnot x_j$). A constraint $C_i$ is satisfied by an assignment if and only if not all $k$ of its literals evaluate either to $1$ or to $0$.

    Let $\delta, \eps \in \R$ be such that $0 < \delta \leq \eps \leq 1$. 
    The goal of the $(\delta, \eps)$-$\NAEEkSAT$ problem is, given a $\NAEEkSAT$ formula as input, to decide between the following two cases.
    
    \begin{itemize}
        \item (YES instance.) $\val(\phi) \geq \eps$.
        \item (NO instance.) $\val(\phi) < \delta$.
    \end{itemize}
\end{definition}

We represent $\NAEEkSAT$ constraints $C$ as sets of $k$ literals, and we use the notation $\vars(C)$ to denote the set of indices of variables involved in a $\NAEEthreeSAT$ constraint $C$.
For example, if $k = 3$ and $C := \set{v_1, \lnot v_3, \lnot v_4}$ then $\vars(C) = \set{1, 3, 4}$.
We define $\sgn(\ell) \in \pmo$ and $\var(\ell)$ to be the sign and the index of the variable underlying a literal $\ell$, respectively. For example, $\sgn(\lnot v_4) = -1$ and $\var(\lnot v_4) = 4$.

We next give a result showing $\NP$-hardness of $(15/16 + \eps, 1)$-$\NAEEthreeSAT$ for any constant $\eps > 0$ in \cref{thm:nae-e3-sat-hard}. This result follows by combining a tight hardness of approximation result for $\NAEEfourSAT$ by H\r{a}stad~\cite{hastad} (\cref{thm:hastad-4ary-csp}) with a reduction from approximate $\NAEEfourSAT$ to approximate $\NAEEthreeSAT$ (\cref{thm:nae-e4-sat-to-nae-e3-sat}). While \cref{thm:nae-e3-sat-hard} may very well be folklore, we did not find a suitable reference for it and so include its proof in \cref{sec:nae-e3-sat-hardness}.

\begin{theorem} \label{thm:nae-e3-sat-hard}
    For any constant \( \eps \in (0, 1/16) \), \( (15/16+\eps, 1) \)-$\NAEEthreeSAT$ is \( \NP \)-hard.
\end{theorem}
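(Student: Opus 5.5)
Following the outline in the paper, I would combine H\r{a}stad's tight hardness for $\NAEEfourSAT$ with a gadget reduction from $\NAEEfourSAT$ to $\NAEEthreeSAT$.

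\textbf{Starting point.} H\r{a}stad~\cite{hastad} shows that for every constant $\eps' > 0$, $(7/8 + \eps', 1)$-$\NAEEfourSAT$ is $\NP$-hard. (The threshold $7/8$ is the random-assignment value of a $4$-ary NAE constraint, and the hardness holds with perfect completeness.)

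\textbf{Gadget.} Given a $\NAEEfourSAT$ formula, I would replace each constraint $\mathrm{NAE}(\ell_1,\ell_2,\ell_3,\ell_4)$ by two $3$-ary constraints
\[
\mathrm{NAE}(\ell_1,\ell_2,y) \wedge \mathrm{NAE}(\lnot y,\ell_3,\ell_4),
\]
where $y$ is a fresh auxiliary variable, local to that constraint. Since all literals in each new constraint are on distinct variables, the output is a valid $\NAEEthreeSAT$ instance.

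\textbf{Completeness.} Suppose the four literals are not all equal; I claim some $y$ satisfies both clauses.
\begin{itemize}
\item If $\ell_1 \neq \ell_2$, the first clause holds for any $y$. Since $\ell_3,\ell_4$ cannot both equal the common value they would need to make the second clause fail for every choice, set $\lnot y$ to differ from one of them (or choose freely if $\ell_3 \neq \ell_4$).
\item If $\ell_1 = \ell_2 = a$, set $y = \lnot a$, which satisfies the first clause. Then $\lnot y = a$, and the second clause holds unless $\ell_3 = \ell_4 = a$. That case is excluded because the four literals are not all equal.
\end{itemize}
So a satisfying $4$-NAE assignment extends to satisfy every $3$-ary constraint, and perfect completeness is preserved.

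\textbf{Soundness.} If the original constraint is violated (all four literals equal $a$), then the first clause needs $y \neq a$ and the second needs $\lnot y \neq a$. These are incompatible, so exactly one of the two clauses fails under the best choice of $y$, and at least one fails under any choice. If the constraint is satisfied, both clauses can be satisfied. Choosing each auxiliary $y$ optimally (valid because $y$ is local), we get
\[
\val(\phi') = \frac{2m - \#\text{violated}}{2m} = \frac{1 + \val(\phi)}{2}.
\]
Hence $\val(\phi) < 7/8 + \eps'$ implies $\val(\phi') < 15/16 + \eps'/2$. Taking $\eps' = 2\eps$ gives the theorem. The condition $\eps < 1/16$ only ensures that $15/16 + \eps < 1$, so the problem is nontrivial.

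\textbf{Main obstacle.} The step needing care is the soundness accounting. It must hold for every assignment to the original variables together with an arbitrary (not just optimal) choice of the auxiliaries, and the constraint count must be exactly doubled so the fraction is computed correctly. The other point to verify is that the cited H\r{a}stad result really is stated for $\NAEEfourSAT$ with perfect completeness and exactly four distinct variables per constraint. If it is stated only for $4$-set splitting with possible repeated variables, I would pad or filter constraints before applying the gadget.
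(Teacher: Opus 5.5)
Your proposal is correct and matches the paper's proof. The paper also obtains $(7/8+\eps',1)$-$\NAEEfourSAT$ hardness from H\r{a}stad's Theorem 7.18, and it uses exactly your gadget $\mathrm{NAE}(\ell_1,\ell_2,s_i)\wedge\mathrm{NAE}(\ell_3,\ell_4,\lnot s_i)$ to get $\val(\phi')=\frac{1}{2}(1+\val(\phi))$; your direct per-pair count for the upper bound (each violated 4-ary constraint forces a failed clause for every choice of auxiliary) is a slightly cleaner substitute for the paper's Markov-inequality step.
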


\section{Reduction from NAE3SAT to BinaryGapCRP}
\label{sec:reduction-nae3sat-to-bingapcrp}

\subsection{The Reduction}
\label{sec:the-reduction}

We start by giving our reduction from $(\delta, \eps)$-$\NAEEthreeSAT$ to $\gamma$-$\BinGapCRP_p$ for certain values of $\delta, \eps, \gamma, p$. In what follows, we will analyze the reduction and for which parameter values it is valid. We emphasize again that the reduction is similar to the $\NP$-hardness reduction in~\cite{manurangsi21} for $\LinDisc_{\infty}$.

Let \( \phi \) be the input $(\delta, \eps)$-$\NAEEthreeSAT$ formula, and assume that $\phi$ has $n$ variables \( v_1, \dots, v_n \) and $m$ constraints \( C_1, \dots, C_m \). We define the incidence matrix $\basis \in \R^{m \times n}$ of $\phi$ as
\begin{equation} \label{eq:def-B}
    B_{i,j} := B_{i,j}(\phi) = 
    \begin{cases}
        1 & \text{if constraint \( C_i \) contains literal \( v_j \) ,} \\
        -1 & \text{if constraint \( C_i \) contains literal \( \lnot v_j \) ,} \\
        0 & \text{otherwise .}
    \end{cases}
\end{equation}
We will also use the following gadget matrix, defined in \cite{manurangsi21}, which we analyze in \cref{sec:analysis-of-G}.
\begin{equation} \label{eq:def-G}
    \mat{G} := \begin{pmatrix}
        1 & 1 & -1 \\
        1 & -1 & 1 \\
        -1 & 1 & 1
        \end{pmatrix} \ \text{.}
\end{equation}
Let \( \deg_{\phi}(v_j) \) denote the number of constraints \( C_i \) in $\phi$ containing either \( v_j \) or $\lnot v_j$.
When the formula $\phi$ is clear from context, we simply write $\deg(v_j)$.
For $p \in [1, \infty]$, define the diagonal matrix
\begin{equation*} %
    \mat{D}_p := \mat{D}_p(\phi) = 
    \diag(\deg(v_1)^{1/p}, \ldots, \deg(v_n)^{1/p}) =
    \begin{pmatrix}
        \deg(v_1)^{1/p} & \\
        & \ddots & \\
        & & \deg(v_n)^{1/p} 
    \end{pmatrix} \in \R^{n \times n}
    \ \text{.}
\end{equation*}
Finally, we output the $\gamma$-$\BinGapCRP_p$ instance consisting of the matrix
\begin{equation} \label{eq:def-matA}
    \mat{A} := \mat{A}_p(\phi) = \begin{pmatrix}
        \frac{1}{3}\mat{B} & \frac{1}{3}\mat{B} & \frac{1}{3}\mat{B} \\
        & \mat{G} \otimes \mat{D}_p
    \end{pmatrix}
    \in \R^{(m+3n) \times 3n} \ \text{,}
\end{equation}
and distance threshold
\begin{equation} \label{eq:def-r}
r := (\eps m \cdot (4/3)^p + (1-\eps)m \cdot 2^p + 3m(2 + (4/3)^p))^{1/p} > 0\ \text{.}
\end{equation}

\subsection{Analysis of the matrix \texorpdfstring{$\mat{G}$}{G}}
\label{sec:analysis-of-G}

We begin by presenting some technical lemmas about $\mat{G}$ that will be useful in our hardness reductions.
We start with \cref{lem:gadget-properties}, which adapts \cite[Lemma 3]{manurangsi21} from the \( \ell_\infty \) norm to \( \ell_p \) norms.

\begin{lemma}\label{lem:gadget-properties}
    For any \( \vec{u} \in [0,1]^3 \) and \( b \in \pmo \), there exists \( \vec{z} \in \bit^3 \) such that
    \begin{enumerate}
        \item \label{item:G-lin-discrep} (Low discrepancy w.r.t. $\mat{G}$.) \( \norm{\mat{G}(\vec{u} - \vec{z})}_p^p \leq 2 + (4/3)^p \), where $\mat{G}$ is as defined in \cref{eq:def-G}.
        \item \label{item:sign-agree} (Sign agreement.) \( b \cdot (\vec{1}^T(\vec{u} - \vec{z} )) \geq 0 \).
        \item \label{item:ones-lin-discrep} (Low discrepancy w.r.t. $\vec{1}^T$.)
        \( \abs{\vec{1}^T(\vec{u} - \vec{z})} \leq 2 \).
    \end{enumerate}
\end{lemma}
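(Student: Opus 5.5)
Throughout, write $s := \vec{1}^T\vec{u} \in [0,3]$ and, for a candidate $\vec{z} \in \bit^3$ of Hamming weight $k$, write $\vec{d} := \vec{u} - \vec{z}$ and $t := \vec{1}^T\vec{d} = s - k$. The first step is the identity $\mat{G}\vec{v} = (\vec{1}^T\vec{v})\vec{1} - 2\vec{v}$, which holds because each row of $\mat{G}$ is $\vec{1}^T - 2\vec{e}_i^T$. Hence the $i$-th entry of $\mat{G}(\vec{u}-\vec{z})$ is $t - 2d_i$. With this, items 2 and 3 depend only on $t$: item 2 asks that $b\cdot t \geq 0$, and item 3 asks that $|t| \le 2$. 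Item 1 asks that $\sum_i |t-2d_i|^p \le 2 + (4/3)^p$.

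Next I would reduce by symmetry. The whole statement is invariant under permuting coordinates, since $\mat{G}$ is symmetric under simultaneous row and column permutations. It is also invariant under the complementation $\vec{u} \mapsto \vec{1}-\vec{u}$, $\vec{z} \mapsto \vec{1}-\vec{z}$, $b \mapsto -b$, which negates $\vec{d}$ and $t$. So I will assume $b = +1$, which means we need $k \le s$, and assume $u_1 \ge u_2 \ge u_3$.

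The obvious choice of $k = \lfloor s \rfloor$ does not always work. For example, in the complemented setting with $s$ just above $1$ and $\vec{u} \approx \tfrac13\vec{1}$, weight $2$ gives an entry of about $5/3$, and $(5/3)^p > 2 + (4/3)^p$ for large $p$. There one must instead jump two weights, to $\vec{z} = \vec{1}$, which gives entries $-2/3$ and $t = -2$. This is exactly why item 3 allows $|t|$ up to $2$ rather than $1$. So the plan is to define $\vec{z}$ piecewise, always taking it to be the indicator of the $k$ largest coordinates, with $k \in \{0,1,2,3\}$, $k \le s$ and $s - k \le 2$. The choice of $k$ will depend on which of finitely many polyhedral regions of the sorted simplex-like set $\{1 \ge u_1 \ge u_2 \ge u_3 \ge 0\}$ contains $\vec{u}$. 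The regions will be cut by linear inequalities in $s$ and in the gaps $u_i - u_j$. Guided by the extremal example $\tfrac13\vec{1}$, I would start with thresholds at $s \in \{1, 4/3, 2\}$ and with conditions such as $u_1 \ge 2/3$, and adjust them as needed.

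On each region, $\vec{z}$ is fixed, and $\vec{u} \mapsto \|\mat{G}(\vec{u}-\vec{z})\|_p^p$ is convex. So its maximum over the region is attained at a vertex, and item 1 reduces to checking finitely many rational points. These are points such as $\tfrac13\vec{1}$, $(\tfrac23,\tfrac23,0)$, $(1,\tfrac13,\tfrac13)$ and $(1,1,0)$. At each such point the entries are multiples of $1/3$, and the bound $2 + (4/3)^p$ should follow by comparing termwise against one entry of size $4/3$ and two entries of size $1$. Items 2 and 3 are linear in $\vec{u}$, so they also only need checking at the vertices of each region.

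The main obstacle is choosing the region boundaries so that every vertex satisfies item 1 for all $p \ge 1$ simultaneously. In particular, the bound must hold uniformly in $p$, so at each vertex the entries must be dominated termwise by a multiset contained in $\{4/3, 1, 1\}$, for example $\{4/3, 2/3, 2/3\}$ or $\{2/3, 2/3, 2/3\}$. Near the threshold $s = 4/3$ the cases $k=0$ and $k=1$ compete, and that is where I expect the fiddly verification to be.
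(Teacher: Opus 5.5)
\textbf{There is a genuine gap: the case split, which is the whole content of the lemma, is never specified.}

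Your preliminary reductions are sound. The complementation $\vec{u}\mapsto\vec{1}-\vec{u}$, $\vec{z}\mapsto\vec{1}-\vec{z}$, $b\mapsto -b$ legitimately reduces to $b=+1$, which the paper's proof only uses implicitly. Once $\vec{z}$ is fixed on a polytope, convexity does reduce item~1 to checking vertices. There is one small slip: row $i$ of $\mat{G}$ is $\vec{1}^T-2\vec{e}_{4-i}^T$, not $\vec{1}^T-2\vec{e}_i^T$, so the entries of $\mat{G}\vec{d}$ are $t-2d_{4-i}$. Only the multiset of entries enters the norm, so nothing breaks.

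The problem is that you never say which $\vec{z}$ to take where. The cuts you tentatively suggest cannot be made to work. The points $(1,0.9,0)$ and $(0.7,0.65,0.55)$ both have $s=1.9$ and $u_1\ge 2/3$. Yet the first forces $k=1$, since with $k=0$ it has the entry $1.9$. The second forces $k=0$, since with $k=1$ it has the entry $s+1-2u_1=1.5$. Any entry exceeding $4/3$ in absolute value violates item~1 for large $p$. So no rule depending only on $s$ and on whether $u_1\ge 2/3$ can decide between $k=0$ and $k=1$ on $4/3<s<2$.

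\textbf{The missing idea.} The right separating quantity is the largest entry of $\mat{G}\vec{u}$, which equals $s-2\min_i u_i$; in your ordering this is $u_1+u_2-u_3$. With $b=+1$:
\begin{itemize}
\item if $s\ge 2$, take $\vec{z}$ to be the indicator of the two largest coordinates;
\item if $s<2$ and $u_1+u_2-u_3\le 4/3$, take $\vec{z}=\vec{0}$;
\item otherwise, take the indicator of the largest coordinate.
\end{itemize}
This is exactly the paper's three-case proof.

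Each case closes by elementary coordinatewise bounds showing that two entries lie in $[-1,1]$ and one lies in $[-4/3,4/3]$. This gives item~1 for all $p$ at once, with no vertex enumeration. The same $\vec{z}$ also satisfies items~2 and~3, since $t=s-2\in[0,1]$, $t=s\in[0,2)$ and $t=s-1\in(1/3,1)$ in the three cases respectively.

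The one non-obvious estimate is in the last case. There the entry $1-u_1+u_2+u_3$ is at most $1+u_3$, and
$u_3=\frac12\bigl(s-(u_1+u_2-u_3)\bigr)<\frac12\bigl(2-\frac43\bigr)=\frac13$.

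Your vertex-checking route would also succeed once these regions are fixed.
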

\begin{proof}
    The second and third properties are identical to the corresponding properties in~\cite[Lemma 3]{manurangsi21} (which are proved there), and so we only prove the first property.
    We follow the structure of the proof of~\cite[Lemma 3, Item 1]{manurangsi21} to prove the first property. 

    We assume without loss of generality that $0 \leq u_1 \leq u_2 \leq u_3 \leq 1$ throughout the proof.
 This assumption is justified by the easy-to-check fact that for all $3 \times 3$ permutation matrices $\mat{P}$ and $\vec{x} \in \R^3$, $\norm{\mat{G}\vec{x}}_{\infty} = \norm{\mat{G}\mat{P}\vec{x}}_{\infty}$.
 
We now proceed to the proof of \cref{item:G-lin-discrep}. We will consider three cases, and in each case we will bound each coordinate of \( \mat{G}(\vec{u} - \vec{z}) \) individually.
    
 \noindent \textbf{Case 1}: \( \vec{1}^T \vec{u} \geq 2 \). In this case we set  \( \vec{z} := (0, 1, 1)^T \), and have
 \[
 \mat{G} (\vec{u} - \vec{z}) =
 \begin{pmatrix}
 u_1 + u_2 - u_3 \\
 u_1 - u_2 + u_3 \\
 -u_1 + u_2 + u_3 - 2
 \end{pmatrix} \ \text{.}
 \]
 We have following bounds:
    \begin{itemize}
        \item $-1 \leq -u_3 \leq (\mat{G}(\vec{u} - \vec{z}))_1 = (\mat{G}\vec{u})_1 \leq u_1 \leq 1$.
        \item \( -1 \leq -u_2 \leq (\mat{G}(\vec{u} - \vec{z}))_2 = (\mat{G} \vec{u}))_2 \leq u_3 \leq 1\).
        \item \( -4/3 \leq u_3 - 2 \leq (\mat{G}(\vec{u} - \vec{z}))_3 = (\mat{G}\vec{u})_3 - 2 \leq -u_1 \leq 0 \). 
    \end{itemize}

\noindent\textbf{Case 2}: \( \vec{1}^T \vec{u} < 2 \) and \( (\mat{G}\vec{u})_3 \leq 4/3 \).\footnote{In fact, the assumption that $\vec{1}^T \vec{u} < 2$ is not even necessary here.} In this case we set \( \vec{z} := \vec{0} \), and have
\[
 \mat{G} (\vec{u} - \vec{z}) =  
 \begin{pmatrix}
 u_1 + u_2 - u_3 \\
 u_1 - u_2 + u_3 \\
 -u_1 + u_2 + u_3
 \end{pmatrix} \ \text{.}
 \]
We have the following bounds:
    \begin{itemize}
        \item \( -1 \leq -u_3 \leq (\mat{G}(\vec{u} - \vec{z}))_1 = (\mat{G}\vec{u})_1 \leq u_1 \leq 1 \).
        \item \( -1 \leq -u_2 \leq (\mat{G}(\vec{u} - \vec{z}))_2  = (\mat{G}\vec{u})_2 \leq u_3 \leq 1\).
        \item \( 0 \leq u_3 \leq (\mat{G}(\vec{u} - \vec{z}))_3 = (\mat{G}\vec{u})_3 \leq 4/3 \).
    \end{itemize}

   \noindent \textbf{Case 3}: \( \vec{1}^T \vec{u} < 2 \) and \( (\mat{G}\vec{u})_3 > 4/3 \). In this case we set $\vec{z} := (0, 0, 1)^T$, and have
   \[
    \mat{G}(\vec{u} - \vec{z}) = \begin{pmatrix}
    u_1 + u_2 - u_3 + 1 \\
    u_1 - u_2 + u_3 - 1 \\
    -u_1 + u_2 + u_3 - 1
    \end{pmatrix} \ \text{.}
   \]
   We have the following bounds:
    \begin{itemize}
        \item 
        $ 0 \leq -u_3 + 1 \leq (\mat{G}(\vec{u} - \vec{z}))_1 = (\mat{G}\vec{u})_1 + 1 \leq u_1 + 1 = \frac{1}{2}(\vec{1}^T\vec{u} - (\mat{G}\vec{u})_3) + 1 < 1/3 + 1 = 4/3$.
        
        \item \( -1 \leq u_1 - 1 \leq (\mat{G}(\vec{u} - \vec{z}))_2 = (\mat{G}\vec{u})_2 - 1 \leq u_3 - 1 \leq 0 \).
       
        \item \( -1 \leq u_3 - 1 \leq (\mat{G}(\vec{u} - \vec{z}))_3 = (\mat{G}\vec{u})_3 - 1 \leq \vec{1}^T \vec{u} - 1 < 1 \).
    \end{itemize}

    In each case, we have \( \norm*{\mat{G}(\vec{u} - \vec{z})}^p_p \leq 2 + (4/3)^p \), as needed.
\end{proof}

\cref{lem:gadget-property-soundness} is the other part of the extension of~\cite[Lemma 3]{manurangsi21}, and facilitates a reduction to \( \BinGapCRP \), rather than just $\LinDisc$.
We both adapt~\cite[Lemma 3]{manurangsi21} to $\ell_p$ norms for $p < \infty$ and show that $\mat{G} (1/2 \cdot \vec{1}) = (1/2) \cdot \vec{1}$ is far from all vectors in
$\lat(\mat{G}) \setminus \set{\vec{0}, \vec{1}} = \mat{G} \cdot (\Z^3 \setminus \set{\vec{0}, \vec{1}})$ rather than just vectors in $\mat{G} \cdot (\bit^3 \setminus \set{\vec{0}, \vec{1}})$.

\begin{lemma}
    \label{lem:gadget-property-soundness}
    Let \( \vec{z} \in \Z^3 \) and $p \in [1, \infty]$. Then we have the following:
    \begin{enumerate}
        \item \label{item:G-soundness-01} For $\vec{z} \in \set{\vec{0}, \vec{1}}$, \( \norm{\mat{G}(1/2 \cdot \vec{1} - \vec{z})}^p_p = 3/2^p \) if $p < \infty$ and $\norm{\mat{G}(1/2 \cdot \vec{1} - \vec{z})}_{\infty} = 1/2$.
        \item \label{item:G-soundness-non01}
        For $\vec{z} \in \Z^3 \setminus \set{\vec{0}, \vec{1}}$,
        \( \norm{\mat{G}(1/2 \cdot \vec{1} - \vec{z})}^p_p \geq (2 + 3^p)/2^p \) if $p < \infty$ and $\norm{\mat{G}(1/2 \cdot \vec{1} - \vec{z})}_\infty \geq 3/2$.
    \end{enumerate}
\end{lemma}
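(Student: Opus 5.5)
Compute $\mat{G}(\tfrac12\vec{1}-\vec{z})$ directly. Since each row of $\mat{G}$ sums to $1$, $\mat{G}(\tfrac12\vec{1}) = \tfrac12\vec{1}$. Writing $\vec{y} := \mat{G}\vec{z} \in \Z^3$, we have $\mat{G}(\tfrac12\vec{1}-\vec{z}) = \tfrac12\vec{1} - \vec{y}$.

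\cref{item:G-soundness-01} is immediate. For $\vec{z}=\vec{0}$ we get $\vec{y}=\vec{0}$, and for $\vec{z}=\vec{1}$ we get $\vec{y}=\vec{1}$. In both cases every coordinate of $\tfrac12\vec{1}-\vec{y}$ is $\pm\tfrac12$, so the $p$-th power of the norm is $3/2^p$ and the $\ell_\infty$ norm is $1/2$.

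For \cref{item:G-soundness-non01}, first note that every integer $y_i$ satisfies $\abs{\tfrac12 - y_i} \ge \tfrac12$, with equality iff $y_i\in\{0,1\}$; otherwise $\abs{\tfrac12-y_i}\ge \tfrac32$. So it suffices to show that for $\vec{z}\in\Z^3\setminus\{\vec{0},\vec{1}\}$, at least one coordinate of $\vec{y}=\mat{G}\vec{z}$ lies outside $\{0,1\}$. That gives
\[
\norm{\tfrac12\vec{1}-\vec{y}}_p^p \ge 2\cdot 2^{-p} + 3^p 2^{-p} = (2+3^p)/2^p
\]
and $\ell_\infty$ norm at least $3/2$.

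\textbf{The key step} is to characterize the preimages under $\mat{G}$ of $\{0,1\}^3$ within $\Z^3$. Since $\mat{G}$ is invertible (its determinant is $-4$), each $\vec{y}$ has exactly one real preimage. We compute $\mat{G}^{-1}$, or equivalently use the relations
\[
y_1+y_2 = 2z_1,\qquad y_1+y_3 = 2z_2,\qquad y_2+y_3 = 2z_3,
\]
which hold because, e.g., row $1$ plus row $2$ of $\mat{G}$ is $(2,0,0)$. Hence $z_1=(y_1+y_2)/2$, and similarly for $z_2,z_3$.

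Integrality of $\vec{z}$ forces $y_1,y_2,y_3$ to all have the same parity. If $\vec{y}\in\{0,1\}^3$, this means $\vec{y}\in\{\vec{0},\vec{1}\}$, whose preimages are $\vec{z}=\vec{0}$ and $\vec{z}=\vec{1}$ respectively. Therefore any other integer $\vec{z}$ yields some $y_i\notin\{0,1\}$, completing the proof.

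The only mildly delicate point is this parity and invertibility argument; the remaining steps are routine arithmetic.
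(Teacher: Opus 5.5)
Your proof is correct and follows the same route as the paper. Both arguments note that $\mat{G}(\tfrac12\vec{1}-\vec{z}) = \tfrac12\vec{1}-\mat{G}\vec{z}$ has every coordinate in $\Z+\tfrac12$, which reduces the lemma to showing that $\mat{G}\vec{z}\in\bit^3$ forces $\vec{z}\in\set{\vec{0},\vec{1}}$. The paper simply asserts $\lat(\mat{G})\cap\bit^3=\set{\vec{0},\vec{1}}$ and leaves the injectivity of $\mat{G}$ implicit; your parity argument (via $y_1+y_2=2z_1$ and its analogues) together with $\det\mat{G}=-4$ supplies that missing justification.
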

\begin{proof}
    
    We note that \( \mat{G}((1/2) \cdot \vec{1} - \vec{z}) = (1/2) \cdot \vec{1} - \mat{G}\vec{z} \) so that\footnote{This is equivalent to the fact that \( \mat{G} \) has \( \vec{1} \) as an eigenvector with corresponding eigenvalue \( 1 \).}
    \[
    \set{(\mat{G}(1/2 \cdot \vec{1} - \vec{z}) : \vec{z} \in \Z^3} = \set{\mat{G}\vec{z} : \vec{z} \in \Z^3} + (1/2) \cdot \vec{1} \subseteq (\Z + 1/2)^3 \ \text{.}
    \]
    That is, for any $\vec{z} \in \Z^3$, each coordinate of $G((1/2 \cdot \vec{1} - \vec{z})$ is contained in $\Z + 1/2$. 
    Fix such a coefficient vector $\vec{z} \in \Z^3$.
    Then for $p < \infty$, $\norm{\mat{G}((1/2) \cdot \vec{1} - \vec{z})}_p^p = 3/2^p$ if $\mat{G} \vec{z} \in \bit^3$ and $\norm{\mat{G}((1/2) \cdot \vec{1} - \vec{z})}_p^p \geq (2 + 3^p)/2^p$ otherwise.
    Similarly, $\norm{\mat{G}((1/2) \cdot \vec{1} - \vec{z})}_{\infty} = 1/2$ if $\mat{G} \vec{z} \in \bit^3$ and $\norm{\mat{G}((1/2) \cdot \vec{1} - \vec{z})}_{\infty} \geq 3/2$ otherwise.
    The lemma then follows by noting that $\lat(\mat{G}) \cap \bit^3 = \set{\vec{0}, \vec{1}}$. \qedhere

\end{proof}

\subsection{Completeness Analysis}
\label{sec:completeness-analysis}

We start with an expression for the value of $\norm{\mat{A}(\vec{w} - \vec{x})}_p^p$ for the matrix $\mat{A} = \mat{A}(\phi)$ defined in \cref{eq:def-matA} in terms of a $\NAEEthreeSAT$ formula $\phi$ and vectors $\vec{w}, \vec{x}$.

\begin{lemma}
\label{lem:structural-identity}
    Let \( p \geq 1\), and for $i \in [3]$, let \( \vec{y}^i \in \R^n \), let \( \phi \) be an \( \NAEEthreeSAT \) formula with constraints \( C_1, \dots, C_m \), and let \( \mat{A} := \mat{A}_p(\phi) \) as defined in \cref{eq:def-matA}. Let \( \vec{y} := (\vec{y}^1, \vec{y}^2, \vec{y}^3 ) \in \R^{3n} \), let \( \vec{y}^\text{sum} := \vec{y}^1 + \vec{y}^2 + \vec{y}^3 \in \R^n \), and for \( j \in [n] \), let \( \vec{y}_j := (y^1_j, y^2_j, y^3_j) \in \R^3 \).
    Then 
    \begin{equation}\label{eqn:structural-identity-np}
        \norm{\mat{A}\vec{y}}_p^p 
        = (1/3^p) \cdot \norm*{\mat{B}\vec{y}^\text{sum}}_p^p 
        + \sum_{j=1}^n \deg(v_j) \cdot \norm*{\mat{G} \vec{y}_j}_p^p \ \text{.}
    \end{equation}    
    
    Let \( \vec{x}^1, \vec{x}^2, \vec{x}^3 \in \R^n \), let \( \vec{x} := (
    \vec{x}^1, \vec{x}^2, \vec{x}^3) \in \R^{3n} \), and let \( \vec{x}^\text{sum} := \vec{x}^1 + \vec{x}^2 + \vec{x}^3 \in \R^{n} \). Furthermore, for $j \in [n]$, let \( \vec{x}_j := (x^1_j, x^2_j, x^3_j)^T \in \R^3 \), and let \( \vec{b}_i \) denote the \( i \)th row of \( \mat{B} \). Then in particular,
    \begin{equation}\label{eqn:structural-identity-instantiation}
        \norm*{\mat{A}\parens*{\frac{1}{2} \cdot \vec{1} - \vec{x}}}_p^p = 
        \sum_{i = 1}^m \parens*{\abs*{\frac{1}{3} \vec{b}_i \parens*{\frac{3}{2} \cdot \vec{1} - \vec{x}^\text{sum}}}^p + \sum_{j \in \vars(C_i)}\norm*{\mat{G}\parens*{\frac{1}{2} \cdot \vec{1} -        \vec{x}_j
        }}_p^p} \ \text{.}
    \end{equation}
\end{lemma}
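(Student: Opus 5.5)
The plan is to compute $\mat{A}\vec{y}$ block by block and then use the fact that $\norm{\cdot}_p^p$ is additive over a partition of coordinates. Writing $\vec{y} = (\vec{y}^1, \vec{y}^2, \vec{y}^3)$, the first $m$ coordinates of $\mat{A}\vec{y}$ are
\[
\tfrac{1}{3}\mat{B}\vec{y}^1 + \tfrac{1}{3}\mat{B}\vec{y}^2 + \tfrac{1}{3}\mat{B}\vec{y}^3 = \tfrac{1}{3}\mat{B}\vec{y}^{\text{sum}} \ \text{.}
\]
So this block contributes $(1/3^p)\norm{\mat{B}\vec{y}^{\text{sum}}}_p^p$.

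The remaining $3n$ coordinates are $(\mat{G}\otimes\mat{D}_p)\vec{y}$. By the Kronecker structure, the entry indexed by row block $k\in[3]$ and position $j\in[n]$ is
\[
\sum_{l=1}^3 G_{k,l}\deg(v_j)^{1/p} y^l_j = \deg(v_j)^{1/p}(\mat{G}\vec{y}_j)_k \ \text{.}
\]
This uses only that $\mat{D}_p$ is diagonal. Regrouping these $3n$ coordinates by $j$, their $p$-th powers sum to $\sum_j \deg(v_j)\norm{\mat{G}\vec{y}_j}_p^p$, because $(\deg(v_j)^{1/p})^p = \deg(v_j)$. This proves \cref{eqn:structural-identity-np}.

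For \cref{eqn:structural-identity-instantiation}, I will set $\vec{y} := \frac12\vec{1} - \vec{x}$. Then $\vec{y}^{\text{sum}} = \frac32\vec{1} - \vec{x}^{\text{sum}}$ and $\vec{y}_j = \frac12\vec{1} - \vec{x}_j$. Expanding $\norm{\mat{B}\vec{y}^{\text{sum}}}_p^p$ row by row gives $\sum_i |\frac13\vec{b}_i(\frac32\vec{1} - \vec{x}^{\text{sum}})|^p$. For the gadget term, I rewrite $\deg(v_j) = \#\{i : j\in\vars(C_i)\}$ and swap the order of summation:
\[
\sum_j \deg(v_j)\, f(j) = \sum_{i=1}^m \sum_{j\in\vars(C_i)} f(j) \ \text{.}
\]
Combining the two sums under a single $\sum_i$ gives the stated identity.

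No step is a real obstacle. The only care needed is the Kronecker index bookkeeping, namely that coordinate $(k-1)n+j$ of $(\mat{G}\otimes\mat{D}_p)\vec{y}$ equals $\deg(v_j)^{1/p}(\mat{G}\vec{y}_j)_k$, and the degree double-counting. That double-counting uses that each $\NAEEthreeSAT$ constraint has three distinct variables, so each $j$ appears at most once per constraint.
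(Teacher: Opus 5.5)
Your proposal is correct and follows the paper's proof. You split $\norm{\mat{A}\vec{y}}_p^p$ into the $\frac{1}{3}\mat{B}\vec{y}^{\text{sum}}$ block and the $(\mat{G}\otimes\mat{D}_p)\vec{y}$ block, regroup the latter by $j$, then substitute $\vec{y} := \frac{1}{2}\vec{1}-\vec{x}$ and double-count constraint--variable incidences, which is exactly the paper's argument (your explicit Kronecker index check just spells out a step the paper leaves implicit).
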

\begin{proof}
    For \cref{eqn:structural-identity-np},     
    \begin{align*}
        \norm{\mat{A}(\vec{y})}_p^p 
        &= \norm*{(1/3) \cdot (\mat{B}\vec{y}^1 + \mat{B}\vec{y}^2 + \mat{B}\vec{y}^3)}_p^p + \norm{(\mat{G} \otimes \mat{D}_p) \cdot \vec{y}}_p^p \\
        &= 
        \norm*{(1/3) \cdot \mat{B}\vec{y}^\text{sum}}_p^p 
        + \sum_{j=1}^n \deg(v_j) \cdot \norm*{\mat{G} \vec{y}_j}_p^p \ \text{.}
    \end{align*}
    For \cref{eqn:structural-identity-instantiation}, we set \( \vec{y} := \frac{1}{2} \cdot \vec{1} - \vec{x} \) and continue by re-grouping some terms.
    \begin{align*}
        \norm*{\mat{A}\parens*{\frac{1}{2} \cdot \vec{1} - \vec{x}}}_p^p 
        &= \frac{1}{3^p} \cdot \norm*{\mat{B}\parens*{\frac{3}{2} \cdot \vec{1} - \vec{x}^\text{sum}}}_p^p 
        + \sum_{j=1}^n \deg(v_j) \cdot \norm*{\mat{G}\parens*{\frac{1}{2} \cdot \vec{1} - \vec{x}_j}}_p^p \\
        &= \frac{1}{3^p} \cdot \sum_{i = 1}^m \abs*{\vec{b}_i \parens*{\frac{3}{2} \cdot \vec{1} - \vec{x}^\text{sum}}}^p + \sum_{j=1}^n \deg(v_j) \cdot \norm*{\mat{G}\parens*{\frac{1}{2} \cdot \vec{1} - \vec{x}_j}}_p^p \\
        &= \sum_{i = 1}^m \parens*{\abs*{\frac{1}{3} \vec{b}_i \parens*{\frac{3}{2} \cdot \vec{1} - \vec{x}^\text{sum}}}^p + \sum_{j \in \vars(C_i)}\norm*{\mat{G}\parens*{\frac{1}{2} \cdot \vec{1} - \vec{x}_j}}_p^p} \text{.}
    \end{align*}
    The first equality follows by setting $\vec{w} := 1/2 \cdot \vec{1}$ in \cref{eqn:structural-identity-np}.
    The third equality follows by noting for any expression $K(j)$ depending on $j$,
    \[\sum_{j = 1}^n \deg(v_j) K(j) = \sum_{j = 1}^n \sum_{\vars(C_i) : x_j \in \vars(C_i)} K(j) = \sum_{i=1}^m \sum_{j: x_j \in \vars(C_i)} K(j) \ \text{.}
    \]
    Indeed, one can view the two double sums as counting constraint-variable incidences in two different ways.
\end{proof}

We now state lemmas that will be used in the proof of \cref{thm:nae3sat-to-bingapcrp}. We begin with a lemma for completeness in which we upper bound $\lindisc_p(\mat{A})$.

\begin{lemma}\label{lem:correctness-completeness}
    Let \( p \in [1, \infty) \), let $\eps \in [0, 1]$, let \( \phi \) be an $\NAEEthreeSAT$ instance with $m$ constraints and $\val(\phi) \geq \eps$, and let \( \mat{A} := \mat{A}_p(\phi) \) be defined as in \cref{eq:def-matA}.  Then
    \begin{equation}\label{eq:lindisc-completeness}
        \lindisc_p(\mat{A})^p \leq (\eps \cdot \parens*{4/3}^p + (1-\eps) \cdot 2^p + 3\parens*{2+\parens*{4/3}^p}) \cdot m \ \text{.}
    \end{equation}
    \end{lemma}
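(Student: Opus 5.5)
Fix an arbitrary $\vec{w} = (\vec{w}^1, \vec{w}^2, \vec{w}^3) \in [0,1]^{3n}$ and an assignment $\psi$ satisfying at least $\eps m$ constraints of $\phi$. The plan is to construct $\vec{x} \in \bit^{3n}$ block by block, by applying \cref{lem:gadget-properties} to each variable separately, and then to bound $\norm{\mat{A}(\vec{w}-\vec{x})}_p^p$ using \cref{eqn:structural-identity-np} of \cref{lem:structural-identity} with $\vec{y} := \vec{w} - \vec{x}$. For each $j \in [n]$, set $\vec{u} := \vec{w}_j = (w^1_j, w^2_j, w^3_j) \in [0,1]^3$ and take the sign $b_j := 1$ if $\psi(v_j) = 1$ and $b_j := -1$ otherwise. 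Let $\vec{x}_j := \vec{z}$ be the vector promised by \cref{lem:gadget-properties} for this $\vec{u}$ and $b_j$. Writing $s_j := \vec{1}^T(\vec{w}_j - \vec{x}_j) = y^{\mathrm{sum}}_j$, we then have $\abs{s_j} \le 2$, $b_j s_j \ge 0$, and $\norm{\mat{G}\vec{y}_j}_p^p \le 2 + (4/3)^p$.

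The gadget part of \cref{eqn:structural-identity-np} is handled directly. Since every constraint has exactly three variables, $\sum_j \deg(v_j) = 3m$, so
\[
\sum_j \deg(v_j)\norm{\mat{G}\vec{y}_j}_p^p \le 3m\,(2+(4/3)^p).
\]
The remaining term is $(1/3^p)\sum_{i=1}^m \abs{\vec{b}_i \vec{y}^{\mathrm{sum}}}^p$, where $\vec{b}_i \vec{y}^{\mathrm{sum}} = \sum_{\ell \in C_i} \sgn(\ell)\, s_{\var(\ell)}$ is a sum of three terms, each of absolute value at most $2$.

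For an arbitrary constraint, the trivial bound is $\abs{\vec{b}_i\vec{y}^{\mathrm{sum}}} \le 6$, so its contribution is at most $2^p$. For a constraint satisfied by $\psi$, use the $\sgndiff$ fact from the preliminaries with $\vec{x} := (\sgn(\ell))_{\ell\in C_i}$ and $\vec{y} := (s_{\var(\ell)})_{\ell \in C_i}$. The sign $b_{\var(\ell)}$ records the truth value of the variable, so $\sgn(\ell)\, b_{\var(\ell)}$ is the truth value of the literal $\ell$ in $\pm 1$ form. Because $b_j s_j \ge 0$, each product $\sgn(\ell) s_{\var(\ell)}$ is $\ge 0$ when $\ell$ is true and $\le 0$ when $\ell$ is false. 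NAE satisfaction gives at least one true and at least one false literal, so the Hadamard product has both a nonnegative and a nonpositive coordinate, and each coordinate has absolute value at most $2$. Hence $\abs{\vec{b}_i\vec{y}^{\mathrm{sum}}} \le 2\cdot(3-1) = 4$, and the contribution of the constraint is at most $(4/3)^p$.

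Summing over constraints, with at least $\eps m$ satisfied, gives the bound in \cref{eq:lindisc-completeness}, using $(4/3)^p \le 2^p$ to replace "at least $\eps m$" by exactly $\eps m$. Since $\vec{w}$ was arbitrary, the bound holds for $\lindisc_p(\mat{A})^p$. The only delicate step is the sign bookkeeping that translates NAE satisfaction into $\sgndiff$. This works because the degenerate case $s_j = 0$ is harmless: the $\sgndiff$ fact only needs non-strict signs. The rest is routine.
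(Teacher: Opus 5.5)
Your proposal is correct and follows the paper's proof: apply \cref{lem:gadget-properties} to each $\vec{w}_j$ with a sign determined by $\psi$, bound the gadget term by $3m(2+(4/3)^p)$ using $\sum_j \deg(v_j)=3m$, and bound each row of $\frac13\mat{B}\vec{y}^{\mathrm{sum}}$ by $2$ in general and by $4/3$ for satisfied constraints via the $\sgndiff$ fact. Your sign choice $b_j = 2\psi(v_j)-1$ is the negation of the paper's $1-2\psi(v_j)$, which is harmless because $\sgndiff$ is invariant under global negation, and handling $s_j = 0$ through non-strict signs is slightly cleaner than the paper's ``without loss of generality'' step.
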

    \begin{proof}
        Let $\psi$ be an assignment to the variables of $\phi$ such that $\val_{\psi}(\phi) \geq \eps$.
        Let \( \vec{w}^1, \vec{w}^2, \vec{w}^3 \in [0,1]^n \), and let \( \vec{w} := (\vec{w}^1, \vec{w}^2, \vec{w}^3) \). We define \( \vec{x} := (\vec{x}^1, \vec{x}^2, \vec{x}^3) \) for \( \vec{x}^1, \vec{x}^2, \vec{x}^3 \in \bit^n \) 
        as follows.
        For each \( j \in [n] \), the vector \( \vec{x}_j := (x^1_j, x^2_j, x^3_j)^T \) is the vector $\vec{z}$ guaranteed by \cref{lem:gadget-properties} applied to the vector \( \vec{u} := \vec{w}_j = (w^1_j, w^2_j, w^3_j)^T \) and sign \( b := 1 - 2 \psi(v_j) \). 

    We begin by upper bounding \( \norm{\mat{A}(\vec{w} - \vec{x})}_p^p \), which we do by upper bounding the terms on the right-hand side of \cref{eqn:structural-identity-np} individually with \( \vec{y} := \vec{w} - \vec{x} \in \R^{3n} \). 
    From \cref{lem:gadget-properties}, \cref{item:G-lin-discrep} and the fact that $\phi$ has total degree $\sum_{j = 1}^n \deg(v_j) = 3m$, we have 
    \begin{equation} \label{eqn:downstairs-upper-bound}
    \sum_{j=1}^n \deg(v_j) \cdot \norm*{\mat{G}\parens*{\vec{w}_j - \vec{x}_j}}_p^p \leq \sum_{j = 1}^n \deg(v_j) \cdot \parens*{2 + (4/3)^p} = 3m(2 + (4/3)^p) \text{.}
    \end{equation}
    
    To upper bound \( \norm*{\frac{1}{3}\mat{B}(\vec{w}^\text{sum} - \vec{x}^\text{sum})}^p_p \), we upper bound the magnitude of each coordinate $i$ of \( \mat{B}(\vec{w}^\text{sum} - \vec{x}^\text{sum}) \in \R^m \). 
    Suppose that constraint $C_i = \set{\ell_{i,1}, \ell_{i,2}, \ell_{i,3}}$ for some literals $\ell_{i,1}, \ell_{i,2}, \ell_{i,3}$. Then
    \begin{equation} \label{eq:Bwsum-xsum-i}
     (\mat{B}(\vec{w}^\text{sum} - \vec{x}^\text{sum}))_i
     = \sum_{j=1}^3 \sgn(\ell_{i,j}) \cdot (w_{\var(\ell_{i,j})}^\text{sum} - x_{\var(\ell_{i,j})}^\text{sum}) \ \text{.}
    \end{equation}

    By \cref{lem:gadget-properties}, \cref{item:ones-lin-discrep}, we have that $\abs{\vec{1}^T (\vec{w}_j - \vec{x}_j)} = \abs{w^\text{sum}_j - x^\text{sum}_j} \leq 2$ for all $j \in [n]$.
    So, the expression in \cref{eq:Bwsum-xsum-i} always has magnitude at most $6$ by triangle inequality and has magnitude at most $4$ as long as the signs of its summands are not all the same (i.e., as long as $\sgndiff(\vec{b}_i, \vec{w}^{\text{sum}} - \vec{x}^{\text{sum}})$ holds, where $\vec{b}_i$ is the $i$th row of $\basis$).
    We show that this is the case if $\psi$ satisfies constraint $C_i$.
    Assume without loss of generality that $w_{\var(\ell_{i,j})}^\text{sum} \neq x_{\var(\ell_{i,j})}^\text{sum} $ for $j \in [3]$ (since otherwise the expression clearly has magnitude at most $4$).
    By \cref{lem:gadget-properties},
    \cref{item:sign-agree} and our choice of the sign $1 - 2 \psi(v_j)$ when applying \cref{lem:gadget-properties} to $\vec{w}_j$,
    we have that for all $j \in [n]$, $(1 - 2\psi(v_j)) \cdot (w^\text{sum}_j - x^\text{sum}_j) \geq 0$, which implies that $(1 - 2\psi(v_j)) = \sgn( w^\text{sum}_j - x^\text{sum}_j )$.
    So, for $j \in [3]$,
    \begin{equation} \label{eqn:sign-char}
    \sgn(\ell_{i,j}) \cdot \sgn(w_{\var(\ell_{i,j})}^\text{sum} - x_{\var(\ell_{i,j})}^\text{sum})
    = \sgn(\ell_{i,j}) \cdot (1 - 2\psi(v_j))
    =  \begin{cases}
            -1 & \text{if $\psi$ satisfies literal $\ell_{i,j}$ ,} \\
            1 & \text{otherwise .}
        \end{cases}
    \end{equation}

By definition, a $\NAEEthreeSAT$ constraint $C_i$ is satisfied if and only if at least one of its literals is satisfied and at least one is not satisfied. So, if $\psi$ satisfies $C_i$,
$\sgn(\ell_{i,j}) \cdot \sgn(w_{\var(\ell_{i,j})}^\text{sum} - x_{\var(\ell_{i,j})}^\text{sum})$
is equal to $1$ for some $j \in [3]$ and is equal to $-1$ for some $j \in [3]$.
It follows that the expression in \cref{eq:Bwsum-xsum-i} has magnitude at most $4$.
Because $\val(\phi) \geq \eps$ by assumption, 
\begin{equation} \label{eqn:upstairs-upper-bound}
\norm{(1/3) \cdot \basis (\vec{w}^{\text{sum}} - \vec{x}^{\text{sum}})}_p^p
\leq \frac{1}{3^p} \cdot (\eps m \cdot 4^p + (1 - \eps) m \cdot 6^p)
= (\eps \cdot (4/3)^p + (1 - \eps) \cdot 2^p) \cdot m \ \text{.}
\end{equation}

    Combining the upper bounds of \cref{eqn:downstairs-upper-bound,eqn:upstairs-upper-bound} to the terms in \cref{eqn:structural-identity-np} (with \( \vec{y} := \vec{w} - \vec{x} \)) implies \cref{eq:lindisc-completeness}, as needed.
\end{proof}

\subsection{Soundness Analysis}
\label{sec:soundness-analysis}

We now give two lemmas for analyzing the soundness of our reduction. The first lemma establishes that vectors of the form \( \mat{A}\vec{x} \) where \( \vec{x} \in \bit^{3n}\) are closest vectors in $\lat(\mat{A})$ to \( \mat{A} \cdot (1/2 \cdot \vec{1}) \).

\begin{lemma}
\label{lem:correctness-soundness-1}
    Let $p \in [1, \infty]$, let \( \phi \) be an $\NAEEthreeSAT$ instance, and let \( \mat{A} := \mat{A}(\phi) \) be as defined in \cref{eq:def-matA}.
    Then for every \( \vec{y} \in \Z^{3n} \), there exists \( \vec{x} := (\vec{x}', \vec{x}', \vec{x}') \in \bit^{3n} \) with \( \vec{x}' \in \bit^n \) such that \( \norm{\mat{A}(1/2 \cdot \vec{1} - \vec{x})}_p \leq \norm{\mat{A}(1/2 \cdot \vec{1} - \vec{y})}_p \).
\end{lemma}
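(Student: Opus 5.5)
We will argue locally, variable by variable, using the decomposition from \cref{lem:structural-identity}, \cref{eqn:structural-identity-instantiation}. For $\vec{y} \in \Z^{3n}$ write $\vec{y}_j := (y^1_j, y^2_j, y^3_j) \in \Z^3$ for each $j \in [n]$. We build $\vec{x}' \in \bit^n$ coordinate by coordinate, in two cases. If $\vec{y}_j \in \set{\vec{0}, \vec{1}}$, set $x'_j := y^1_j$, so that $\vec{x}_j = \vec{y}_j$. If $\vec{y}_j \notin \set{\vec{0}, \vec{1}}$, set $x'_j := 0$; any choice in $\bit$ would do here. With $\vec{x} := (\vec{x}', \vec{x}', \vec{x}')$, each $\vec{x}_j$ equals $x'_j \cdot \vec{1} \in \set{\vec{0}, \vec{1}}$, and $x^{\text{sum}}_j = 3x'_j$.

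\textbf{Gadget terms.} By \cref{lem:gadget-property-soundness}, each gadget term satisfies
\[
\norm{\mat{G}(\tfrac12 \vec{1} - \vec{x}_j)}_p^p = 3/2^p .
\]
In the first case this equals the corresponding term for $\vec{y}$. In the second case the corresponding term for $\vec{y}$ is at least $(2 + 3^p)/2^p$, an excess of at least $(3^p - 1)/2^p$.

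\textbf{Constraint terms.} We must control the terms $\abs{\frac13 \vec{b}_i(\frac32 \vec{1} - \vec{x}^{\text{sum}})}^p$ for each constraint $C_i$. For $\vec{x}$, each summand $\frac32 - x^{\text{sum}}_j = \pm\frac32$. Hence the constraint term is at most $(3 \cdot \tfrac32 / 3)^p = (3/2)^p$, and it is at least $(1/2)^p$ because the value is an odd multiple of $1/2$. For $\vec{y}$, if every $j \in \vars(C_i)$ falls in the first case, the two constraint terms are equal. Otherwise, the loss for $\vec{x}$ on constraint $C_i$ is at most $(3/2)^p$, since the $\vec{y}$ term is nonnegative.

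\textbf{Charging.} Each variable in the second case appears in $\deg(v_j)$ constraints and pays its excess $\deg(v_j) \cdot (3^p - 1)/2^p$. Charge each affected constraint to one of its second-case variables. It then suffices to show $(3^p - 1)/2^p \ge (3/2)^p$, and this fails: $(3^p - 1)/2^p$ falls short of $(3/2)^p$ by exactly $1/2^p$. This is the main obstacle, and the crude bound must be tightened in one of two ways.

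\emph{First way: sharper lower bound on the gadget excess.} Use $\vec{1}^T\mat{G} = \vec{1}^T$, so $\vec{1}^T \mat{G}(\tfrac12\vec{1} - \vec{y}_j) = \tfrac32 - y^{\text{sum}}_j$. Since the $\ell_p$ norm dominates $\ell_\infty$, a large $\abs{\tfrac32 - y^{\text{sum}}_j}$ forces a large gadget norm, which pays for a large constraint contribution.

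\emph{Second way: smarter choice of $x'_j$.} In the second case, choose $x'_j \in \bit$ so that $\frac32 - 3x'_j$ has the same sign as $\frac32 - y^{\text{sum}}_j$ and magnitude no larger, whenever $\abs{\tfrac32 - y^{\text{sum}}_j} \ge \tfrac32$. This case holds automatically when $y^{\text{sum}}_j \notin \set{1, 2}$. Then each summand shrinks in absolute value while keeping its sign, so each constraint term can only decrease.

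\textbf{Remaining case.} What is left is $y^{\text{sum}}_j \in \set{1, 2}$ with $\vec{y}_j \notin \set{\vec{0}, \vec{1}}$. Here the summand changes from $\pm\tfrac12$ to $\pm\tfrac32$, increasing each affected constraint term by at most $(3/2)^p - (1/2)^p$ per constraint. The gadget excess per incidence, $(3^p - 1)/2^p$, equals exactly this quantity. So a careful per-incidence accounting, using the convexity bound $\abs{a + \delta}^p - \abs{a}^p \le$ the worst case over the three summands, should close the gap; I expect verifying this accounting to be the main work. The case $p = \infty$ is handled analogously with maxima in place of sums.
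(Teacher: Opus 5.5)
Your proposal is not yet a proof, and the step meant to handle most of the variables is false.

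\textbf{The false step.} In the ``second way'' you claim the following: if each summand $a_j := B_{i,j}(\tfrac32 - y^{\mathrm{sum}}_j)$ is replaced by one of the same sign and no larger magnitude, then the constraint term can only decrease. This fails whenever the $a_j$ have mixed signs. Take $C_i = \{v_1, v_2, v_3\}$, $\vec{y}_1 = (-1,0,0)$ and $\vec{y}_2 = \vec{y}_3 = \vec{1}$.
\begin{itemize}
\item For $\vec{y}$: $(a_1,a_2,a_3) = (5/2,-3/2,-3/2)$ sums to $-1/2$, so the constraint term is $(1/6)^p$.
\item For your $\vec{x}$: the summands are $(3/2,-3/2,-3/2)$, summing to $-3/2$, so the term is $(1/2)^p$.
\end{itemize}
Since $|\tfrac32 - y^{\mathrm{sum}}_1| = 5/2$, this variable lies in your ``second way'' regime. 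So such variables cannot go uncharged. Your per-variable split into ``second way'' and ``remaining case'' therefore does not give an inequality constraint by constraint. The remaining-case accounting is, as you say yourself, left unverified, and the ``first way'' is only a heuristic.

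\textbf{The missing ingredient.} What is needed is a per-constraint dichotomy on $\sgndiff$ under one uniform sign-preserving rule: $x'_j := 0$ if $y^{\mathrm{sum}}_j \le 1$ and $x'_j := 1$ if $y^{\mathrm{sum}}_j \ge 2$. This is consistent with your first case. Then $\mathrm{sgn}(x'_j - 1/2) = \mathrm{sgn}(y^{\mathrm{sum}}_j/3 - 1/2)$, so the three summands have the same sign pattern for $\vec{x}$ as for $\vec{y}$. There are two cases.
\begin{itemize}
\item If all signs agree, the $\vec{x}$ constraint term is $(3/2)^p$. The $\vec{y}$ term is at least $(1/2)^p$, because each $\tfrac12 - y^{\mathrm{sum}}_j/3$ is an odd multiple of $1/6$.
\item If the signs differ, the $\vec{x}$ term is $(1/2)^p$ and the $\vec{y}$ term is at least $(1/6)^p$.
\end{itemize}
Either way the loss is at most $(3/2)^p - (1/2)^p = (3^p-1)/2^p$. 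The lower bound $(1/2)^p$ on the $\vec{y}$ term, instead of $0$, is exactly the $1/2^p$ your crude charging was short. This loss is covered by the excess, from Lemma~\ref{lem:gadget-property-soundness}, of a single $j \in \vars(C_i)$ with $\vec{y}_j \notin \{\vec{0},\vec{1}\}$. Constraints with no such $j$ have identical terms for $\vec{x}$ and $\vec{y}$. This is the paper's argument. It needs no split by the size of $|\tfrac32 - y^{\mathrm{sum}}_j|$, and no charging of more than one incidence per constraint.
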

\begin{proof}
    Let $\vec{y}^1, \vec{y}^2, \vec{y}^3 \in \Z^n$ be such that \( \vec{y} = (\vec{y}^1, \vec{y}^2, \vec{y}^3) \in \Z^{3n} \), let \( \vec{y}^\text{sum} := \vec{y}^1 + \vec{y}^2 + \vec{y}^3 \in \Z^n \), and for each \( j \in [n] \), let \( \vec{y}_j := (y^1_j, y^2_j, y^3_j) \). 
    For each \( j \in [n] \), define
    \begin{equation}
        \label{eqn:setting-x-from-y}
        x_j' := 
        \begin{cases}
            0 & \text{if \( y^\text{sum}_j \leq 1 \) ,} \\
            1 & \text{if \( y^\text{sum}_j \geq 2 \) .}
        \end{cases}
    \end{equation}
Furthermore, define $\vec{x}' := (x_1', \ldots, x_n')^T \in \bit^n$, $\vec{x} := (\vec{x}', \vec{x}', \vec{x}') \in \bit^{3n}$, and for each \( j \in [n] \), \( \vec{x}_j := (x'_j, x'_j, x'_j ) \in \bit^3 \).
    
    We continue the proof by upper and lower bounding the $i$th terms in the sum in \cref{eqn:structural-identity-instantiation} for \( \norm{\mat{A}(1/2 \cdot \vec{1} - \vec{x})}^p_p \) and for \( \norm{\mat{A}(1/2 \cdot \vec{1} - \vec{y})}^p_p \), respectively. Specifically, we show that the value of the $i$th term in the former sum is at most the value of the $i$th term in the latter sum for every $i \in [m]$.
    By the definition of $\vec{x}$ in terms of $\vec{y}$, these $i$th terms are the same if $\vec{y}_j \in \set{\vec{0}, \vec{1}}$ for all $j \in \vars(C_i)$, where $C_i$ is the $i$th constraint of $\phi$.
    So, we assume without loss of generality that this is not the case, i.e., that for every $i \in [m]$ there exists $j \in \var(C_i)$ such that $\vec{y}_j \in \Z^3 \setminus \set{\vec{0}, \vec{1}}$.
    
    We begin by upper bounding the terms in \( \norm{\mat{A}(1/2 \cdot \vec{1} - \vec{x})}^p_p \). Fix some \( i \in [m] \) corresponding to a term of \cref{eqn:structural-identity-instantiation}.
    Because \( \vec{x}_j \in \set{\vec{0}, \vec{1}} \) for all \( j \in [n] \) by definition, by \cref{lem:gadget-property-soundness}, we have the equality
    \begin{equation}
        \label{eqn:downstairs-x}
        \sum_{j \in \vars(C_i)} \norm{\mat{G}(1/2 \cdot \vec{1} - \vec{x}_j)}^p_p = \sum_{j \in \vars(C_i)} 3/2^p = 3 \cdot (3/2^p) = 9/2^p \  \text{.}
    \end{equation}
    
    Let $\vec{b}_i$ be the $i$th row of the matrix $\basis$ in \cref{eq:def-B}, as in \cref{eqn:structural-identity-instantiation}.
    Again using the fact that each $\vec{x}_j \in \set{\vec{0}, \vec{1}}$, we have
    \begin{equation}
        \label{eqn:upstairs-x}
        \abs{\iprod{\vec{b}_i, (1/2) \cdot \vec{1} - \vec{x}'}}
        = \Big|\sum_{j \in \vars(C_i)} B_{i,j} \cdot (1/2 - x_j') \Big|
        = \begin{cases}
            1/2 & \text{if $\sgndiff(\vec{b}_i, 1/2 \cdot \vec{1} - \vec{x}')$ ,} \\
            3/2 & \text{otherwise .}
        \end{cases}
    \end{equation}
    Combining \cref{eqn:downstairs-x,eqn:upstairs-x}, we have 
    \begin{equation}
        \label{eqn:total-bound-x}
                \abs{\iprod{\vec{b}_i, (1/2) \cdot \vec{1} - \vec{x}'}}^p + \sum_{j \in C_i}\norm{\mat{G}\parens{1/2 \cdot \vec{1} - \vec{x}_j}'}_p^p  = 
        \begin{cases}
            10/2^p & \text{if $\sgndiff(\vec{b}_i, 1/2 \cdot \vec{1} - \vec{x}')$ ,}  \\
            (9 + 3^p)/2^p  & \text{otherwise .}
        \end{cases}
    \end{equation}

    We now lower bound \( \norm{\mat{A}(1/2 \cdot \vec{1} - \vec{y})}^p_p \). We have that
    \begin{equation}
        \label{eqn:downstairs-y}
        \sum_{j \in \vars(C_i)} \norm{\mat{G}(1/2 \cdot \vec{1} - \vec{y}_j)}^p_p 
        \geq (3/2)^p + 8/2^p \ \text{.}
    \end{equation}
    Indeed, by \cref{lem:gadget-property-soundness}, the terms for which \( \vec{y}_j \in \set{\vec{0}, \vec{1}} \) are equal to $3/2^p$ and the terms in the sum for which $\vec{y}_j \notin \set{\vec{0}, \vec{1}}$ must have value at least $(3/2)^p + 2/2^p$. 
    And, since we are only considering constraints $C_i$ for which there exists \( j \) such that \( \vec{y}_j \notin \set{\vec{0}, \vec{1}} \), at least one term must have value at least $(3/2)^p + 2/2^p$.
    
    We continue by giving a lower bound for \( \abs{(1/3) \cdot \vec{b}_i(3/2 \cdot \vec{1} - \vec{y}^\text{sum})} \). Here, we observe that \( (1/2) - (y^\text{sum}_j / 3) \in \set{1/6,3/6,5/6} + \Z \). This implies that
    \begin{align}
        \label{eqn:upstairs-y}
        \iprod{\vec{b}_i, (1/2) \cdot \vec{1} - \vec{y}^{\text{sum}}/3}
        &= \Big|\sum_{j \in \vars(C_i)} B_{i,j} \cdot (1/2 - y_j^{\text{sum}}/3) \Big| \nonumber \\
        &\geq \begin{cases}
            1/6 & \text{if $\sgndiff(\vec{b}_i, 1/2 \cdot \vec{1} - \vec{y}^{\text{sum}}/3)$} \\
            1/2 & \text{otherwise}
        \end{cases}
        \text{,}
    \end{align}
    where the inequality follows by noting that because all three summands in the middle expression are elements of \( \set{1/6, 3/6, 5/6} + \Z \). If all signs of the summands are the same, one obtains the lower bound \( 3/6 = 1/2 \), whereas if the signs of the summands differ, one only has the lower bound of \( 1/6 \).
        
    By combining \cref{eqn:downstairs-y,eqn:upstairs-y}, we have the lower bound
    \begin{align}
        \label{eqn:total-bound-y}
       \abs{\iprod{\vec{b}_i, (1/2) \cdot \vec{1} - \vec{y}^{\text{sum}}/3}}^p + & \sum_{j \in \vars(C_i)} \norm{\mat{G}(1/2 \cdot \vec{1} - \vec{y}_j)}^p_p \nonumber
        \\ &\geq \begin{cases}
            1/6^p + (3/2)^p + 8/2^p & \text{if $\sgndiff(\vec{b}_i, 1/2 \cdot \vec{1} - \vec{y}^{\text{sum}}/3)$}  \\
           (9 + 3^p)/2^p & \text{otherwise}
        \end{cases}
        \text{.}
    \end{align}
    
    Finally, by definition of \( \vec{x}' \) (\cref{eqn:setting-x-from-y}), we have that for all \( j \in [n] \), $\sgn(x_j' - 1/2) = \sgn(y_j^{\text{sum}}/3 - 1/2)$.
    Then, it follows that for all \( i \in [m] \) that $\sgndiff(\vec{b}_i, 1/2 \cdot \vec{1} - \vec{x}')$ if and only if $\sgndiff(\vec{b}_i, 1/2 \cdot \vec{1} - \vec{y}^{\text{sum}}/3)$.
    By noting that for all \( p \geq 1 \), 
    \( 2/2^p \leq 1/6^p + (3/2)^p \), we combine \cref{eqn:total-bound-x,eqn:total-bound-y} to obtain that for all $i \in [m]$,
    \begin{align*}
        & \abs{\iprod{\vec{b}_i, (1/2) \cdot \vec{1} - \vec{x}'}}^p + \sum_{j \in \vars(C_i)}\norm{\mat{G}\parens{1/2 \cdot \vec{1} - \vec{x}_j}}_p^p \\
        \leq \ &  \abs{\iprod{\vec{b}_i, (1/2) \cdot \vec{1} - \vec{y}^{\text{sum}}/3}}^p +\sum_{j \in \vars(C_i)} \norm{\mat{G}(1/2 \cdot \vec{1} - \vec{y}_j)}^p_p \ \text{,}
    \end{align*}
    where we note that the left- and right-hand sides of the inequality are equal to the $i$th term in the sum in \cref{eqn:structural-identity-instantiation}. The result follows.
\end{proof}

The following lemma shows soundness for our reduction. Specifically, it shows how to construct an assignment satisfying a large fraction of constraints in an $\NAEEthreeSAT$ formula $\phi$ from a close vector $\mat{A} \vec{x}$ for $\vec{x} \in \bit^{3n}$ to $\mat{A} (1/2 \cdot \vec{1}_{3n})$, where $\mat{A} = \mat{A}_p(\phi)$.
\begin{lemma}\label{lem:correctness-soundness-2}
    Let $p \in [1, \infty)$, let \( \phi \) be an $\NAEEthreeSAT$ instance with $n$ variables and $m$ constraints, let \( \mat{A} := \mat{A}_p(\phi) \) be as defined in \cref{eq:def-matA}, and let \( \delta \in [0,1] \). 
    Suppose that there exists some vector \( \vec{x} := (\vec{x}^1, \vec{x}^2, \vec{x}^3) \in  \Z^{3n} \) such that
    \begin{equation}
        \label{eqn:soundness-upper-bound}
        \norm{\mat{A} ((1/2) \cdot \vec{1} - \vec{x})}_p^p \leq \delta m \cdot (10/2^p) + (1-\delta)m \cdot (9/2^p + (3/2)^p) \ \text{.}    
    \end{equation}
    Then $\val(\phi) \geq \delta$.
\end{lemma}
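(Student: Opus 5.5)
First apply \cref{lem:correctness-soundness-1} to replace $\vec{x}$ by a vector $\vec{x}^\star := (\vec{x}', \vec{x}', \vec{x}') \in \bit^{3n}$ with $\vec{x}' \in \bit^n$ and $\norm{\mat{A}(1/2 \cdot \vec{1} - \vec{x}^\star)}_p \leq \norm{\mat{A}(1/2 \cdot \vec{1} - \vec{x})}_p$. The hypothesis \cref{eqn:soundness-upper-bound} therefore also holds for $\vec{x}^\star$. Take the assignment $\psi(v_j) := x'_j$; the goal is to show $\val_\psi(\phi) \geq \delta$.

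Next evaluate $\norm{\mat{A}(1/2 \cdot \vec{1} - \vec{x}^\star)}_p^p$ term by term using \cref{eqn:structural-identity-instantiation}. Since every $\vec{x}^\star_j \in \set{\vec{0}, \vec{1}}$, the argument leading to \cref{eqn:total-bound-x} gives the $i$th term exactly. It equals $10/2^p$ if $\sgndiff(\vec{b}_i, 1/2 \cdot \vec{1} - \vec{x}')$ holds. Otherwise it equals $(9 + 3^p)/2^p = 9/2^p + (3/2)^p$. One point needs care: with $\vec{x}^{\text{sum}} = 3\vec{x}'$, we have $\frac{1}{3}\vec{b}_i(\frac{3}{2}\vec{1} - 3\vec{x}') = \vec{b}_i(\frac{1}{2}\vec{1} - \vec{x}')$, which is the quantity appearing in \cref{eqn:upstairs-x}.

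Let $S$ be the set of indices $i$ for which $\sgndiff$ holds, and write $s := \card{S}/m$. The total is then $m\,(s \cdot 10/2^p + (1-s)(9/2^p + (3/2)^p))$. Because $(3/2)^p > 1/2^p$ for $p \geq 1$, this expression is strictly decreasing in $s$. Comparing it with the bound in \cref{eqn:soundness-upper-bound} therefore forces $s \geq \delta$.

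It remains to check that $i \in S$ exactly when $\psi$ satisfies $C_i$. The $j$th coordinate of $\vec{b}_i \odot (1/2 \cdot \vec{1} - \vec{x}')$ is $\sgn(\ell) \cdot (1/2 - x'_{\var(\ell)})$. This is positive if and only if the literal $\ell$ is true under $\psi$: for a positive literal $\ell = v_j$ it is positive exactly when $x'_j = 0$. Since the product is never zero, we must fix the convention, setting $\psi(v_j) := 1 - x'_j$ if needed, or simply observing that NAE satisfaction is invariant under complementing all variables. With that convention, $\sgndiff$ holds exactly when the nonzero coordinates take both signs, which is exactly the NAE condition. Hence $\val(\phi) \geq \val_\psi(\phi) = s \geq \delta$. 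I expect this sign bookkeeping to be the only delicate step; the rest is direct substitution.
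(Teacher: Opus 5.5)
Your proposal is correct and follows essentially the paper's proof: reduce to $\vec{x} = (\vec{x}', \vec{x}', \vec{x}')$ via \cref{lem:correctness-soundness-1}, use that each $i$th term is exactly $10/2^p$ or $9/2^p + (3/2)^p$ according to whether $\sgndiff$ holds, conclude that at least a $\delta$ fraction of rows satisfy $\sgndiff$ (your explicit strict-monotonicity-in-$s$ step is what the paper leaves implicit after subtracting the gadget contribution $9m/2^p$), and read off the assignment from $\vec{x}'$. The one wobble is the sign bookkeeping: with $\psi(v_j) := x'_j$, the coordinate $\sgn(\ell)(1/2 - x'_{\var(\ell)})$ is \emph{negative} exactly when $\ell$ is true. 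Since $\sgndiff$ (read, as the paper intends, on the support of $\vec{b}_i$) only asks that both signs appear, it directly encodes the NAE condition, so no change of convention or complementation is needed.
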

\begin{proof}

    Suppose that there exists \( \vec{x} \in \Z^{3n} \) satisfying \cref{eqn:soundness-upper-bound}.
    Then by \cref{lem:correctness-soundness-1}, we may assume without loss of generality that
    \( \vec{x} = (\vec{x}', \vec{x}', \vec{x}') \) for \( \vec{x}' \in \bit^n \) and let \( \vec{x}_j := (x'_j, x'_j, x'_j ) \) for each \( j \in [n] \).
    Therefore, for each \( i \in [m] \), both \cref{eqn:downstairs-x,eqn:upstairs-x} hold for \( \vec{x} \).
     
     First, we obtain the bound
    \begin{align*}
    (1/3^p) \cdot \norm{\mat{B}((3/2) \cdot \vec{1} - \vec{x}^\text{sum})}^p_p &= \norm{\mat{A} ((1/2) \cdot \vec{1} - \vec{x})}_p^p - \sum_{j=1}^n \deg(v_j) \cdot \norm{\mat{G}((1/2) \cdot \vec{1} - \vec{x}_j)}^p_p \\
    &= \norm{\mat{A} ((1/2) \cdot \vec{1} - \vec{x})}_p^p - m \cdot (9/2^p) \\
    &\leq \delta m \cdot (1/2^p) + (1 - \delta) m \cdot (3/2)^p \ \text{,}
    \end{align*}
    where the first equality uses \cref{eqn:structural-identity-np} with \( \vec{y} := (1/2) \cdot \vec{1} - \vec{x} \), the second equality uses \cref{eqn:downstairs-x}, and the inequality is by assumption (\cref{eqn:soundness-upper-bound}).

    It then follows from \cref{eqn:upstairs-x} that at least an \( \delta \) fraction of indices \( i \in [m] \) satisfy $\sgndiff(\vec{b}_i, 1/2 \cdot \vec{1} - \vec{x}')$.
    Define an assignment $\psi$ to the variables $v_1, \ldots, v_n$ of $\phi$ by $\psi(v_j) = x_j'$.
    We claim that if $\sgndiff(\vec{b}_i, 1/2 \cdot \vec{1} - \vec{x}')$ then $\psi$ satisfies constraint $C_i$, which combined with the fact that at least an \( \delta \) fraction of indices \( i \in [m] \) satisfy $\sgndiff(\vec{b}_i, 1/2 \cdot \vec{1} - \vec{x}')$, would imply the lemma.

    It remains to prove the claim.
    Observe that if constraint $C_i = \set{\ell_{i,1}, \ell_{i,2}, \ell_{i,3}}$ 
    for literals $\ell_{i,1}, \ell_{i,2}, \ell_{i,3}$ then, as in
    \cref{eq:Bwsum-xsum-i},
    \[
     (\mat{B}(1/2 \cdot \vec{1} - \vec{x}'))_i
     = \sum_{j=1}^3 \sgn(\ell_{i,j}) \cdot (1/2 - x_{\var(\ell_{i,j})}) \ \text{.}
    \]    
    Furthermore, as in \cref{eqn:sign-char}, for $j \in [3]$,
    \[
   \sgn(\ell_{i,j}) \cdot \sgn(1/2 - x_{\var(\ell_{i,j})})
    =  \begin{cases}
            -1 & \text{if $\psi$ satisfies literal $\ell_{i,j}$ ,} \\
            1 & \text{otherwise .}
        \end{cases}
    \]
    The predicate $\sgndiff(\vec{b}_i, 1/2 \cdot \vec{1} - \vec{x}')$ holds exactly when $ \sgn(\ell_{i,j}) \cdot (1/2 - x_{\var(\ell_{i,j})}) = 1$ for some $j \in [3]$ and $\sgn(\ell_{i,j}) \cdot (1/2 - x_{\var(\ell_{i,j})}) = -1$ for some $j \in [3]$, meaning that $\psi$ satisfies at least one literal in $C_i$ and does not satisfy at least one literal in $C_i$. This in turn implies that $\psi$ satisfies $C_i$, as needed.

\end{proof}

\subsection{The Main Theorem}
\label{sec:main-theorem}

We now state and prove the main technical theorem of the paper. 
\begin{theorem} \label{thm:nae3sat-to-bingapcrp}
    Let $\delta, \eps \in \R$ be such that $0 < \delta \leq \eps \leq 1$ and let $p \in [1, \infty)$. There is a polynomial-time mapping reduction from $(\delta, \eps)$-$\NAEEthreeSAT$ to $\gamma$-$\BinGapCRP_p$ for  
\begin{equation} \label{eq:hardness-ratio}
    \gamma = \gamma(\delta, \eps, p) :=
    \Big( \frac{\delta \cdot 10/2^p + (1 - \delta) \cdot ((3/2)^p + 9/2^p)}{\eps \cdot (4/3)^p + (1-\eps)2^p + 3(2 + (4/3)^p)}\Big)^{1/p} \ \text{.}
    \end{equation}
\end{theorem}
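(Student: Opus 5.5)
The reduction is the map $\phi \mapsto (\mat{A}_p(\phi), r)$ with $\mat{A}$ as in \cref{eq:def-matA} and $r$ as in \cref{eq:def-r}. It runs in polynomial time because $\mat{B}$, $\mat{G}$ and the degrees are computable directly from $\phi$. One technicality: the entries $\deg(v_j)^{1/p}$ of $\mat{D}_p$ may be irrational, while \cref{def:bin-gap-crp} asks for rational input. I would handle this by rounding to sufficiently many bits and absorbing the resulting small multiplicative error into the $\eps$ slack of the final hardness statement, or else just note it and treat exact real arithmetic as the model for this theorem. I also need $\mat{A}$ to have full column rank. Assuming every variable occurs in some constraint (so $\deg(v_j) \geq 1$), this follows because $\mat{G}$ is invertible ($\det \mat{G} = -4$) and $\mat{D}_p$ is invertible, so the lower block $\mat{G} \otimes \mat{D}_p$ already has rank $3n$. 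Variables that occur in no constraint can be deleted beforehand.

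\textbf{Completeness.} If $\val(\phi) \geq \eps$, then \cref{lem:correctness-completeness} gives
\[
\lindisc_p(\mat{A})^p \leq \big(\eps(4/3)^p + (1-\eps)2^p + 3(2+(4/3)^p)\big)m = r^p .
\]
So $\lindisc_p(\mat{A}) \leq r$, and the output is a YES instance of $\gamma$-$\BinGapCRP_p$.

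\textbf{Soundness.} This I would prove by contraposition. Suppose $\val(\phi) < \delta$; I must show $\mu_p(\lat(\mat{A})) > \gamma r$. By \cref{eq:covrad-prelims}, it suffices to exhibit one target $\vec{w} \in [0,1]^{3n}$ with $\min_{\vec{x} \in \Z^{3n}} \norm{\mat{A}(\vec{w}-\vec{x})}_p > \gamma r$. Take $\vec{w} := \frac12 \vec{1}$. By the choice of $\gamma$ in \cref{eq:hardness-ratio},
\[
(\gamma r)^p = \delta m \cdot 10/2^p + (1-\delta)m\big((3/2)^p + 9/2^p\big),
\]
which is exactly the right-hand side of \cref{eqn:soundness-upper-bound}. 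If some $\vec{x} \in \Z^{3n}$ satisfied $\norm{\mat{A}(\frac12\vec{1} - \vec{x})}_p \leq \gamma r$, then \cref{lem:correctness-soundness-2} would give $\val(\phi) \geq \delta$, a contradiction. Hence every lattice vector is at distance strictly greater than $\gamma r$ from $\mat{A}\cdot\frac12\vec{1}$, so $\mu_p(\lat(\mat{A})) > \gamma r$ and the output is a NO instance. The minimum over $\Z^{3n}$ is attained because the lattice is discrete, so the strict inequality is justified.

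\textbf{Remaining checks.} The main obstacle is bookkeeping rather than new ideas: confirming that $\gamma r$ matches the soundness threshold exactly (both $r^p$ and $(\gamma r)^p$ are linear in $m$, so $m$ cancels in \cref{eq:hardness-ratio}), and the rationality and full-rank issues handled in the first paragraph. I would also remark that the definition allows $\gamma < 1$. This is harmless for the reduction's validity, but the theorem is only meaningful for parameters with $\gamma > 1$, which is what \cref{thm:intro-np-hardness} will exploit by taking $\eps = 1$ and $\delta = 15/16 + \eps'$ via \cref{thm:nae-e3-sat-hard}.
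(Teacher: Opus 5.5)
Your proposal is correct and follows the paper's proof: completeness comes from \cref{lem:correctness-completeness}, and soundness is argued by contraposition using the target $\frac12\vec{1}$, the identity $(\gamma r)^p = \delta m\cdot 10/2^p + (1-\delta)m\big((3/2)^p + 9/2^p\big)$, and \cref{lem:correctness-soundness-2}. Your extra remarks on full column rank (via $\det\mat{G} = -4$ and $\deg(v_j)\ge 1$) and on the possibly irrational entries $\deg(v_j)^{1/p}$ cover technicalities the paper passes over silently; for the matrix entries, you could also sidestep the issue by repeating the three rows of $\mat{G}\otimes\mat{I}_n$ associated with $v_j$ exactly $\deg(v_j)$ times instead of scaling them by $\deg(v_j)^{1/p}$, which leaves \cref{eqn:structural-identity-np} unchanged.
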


\begin{proof}
    Let \( \phi \) be the input instance of \( (\delta, \eps) \)-$\NAEEthreeSAT$. Assume that $\phi$ has \( m \) constraints.
    The reduction outputs the $\BinGapCRP_p$ instance consisting of the matrix \( \mat{A} = \mat{A}_p(\phi) \) defined in \cref{eq:def-matA} and the distance threshold \( r := (\eps m \cdot (4/3)^p + (1-\eps)m \cdot 2^p + 3m(2 + (4/3)^p))^{1/p} \) defined in \cref{eq:def-r}.

    It is clear that the reduction runs in polynomial time, and it remains to show correctness. We start with completeness. If \( \phi \) is a YES instance of $(\delta, \eps)$-$\NAEEthreeSAT$, then $\lindisc_p(\mat{A}) \leq r$ by \cref{lem:correctness-completeness} and so the output is a YES instance of $\BinGapCRP_p$.
    
    For soundness, we show the contrapositive. Suppose that  \[ \mu_p(\lat(\mat{A})) \leq \gamma r = (\delta m \cdot 10/2^p + (1 - \delta)m \cdot ((3/2)^p + 9/2^p))^{1/p} \ \text{.} \]
    Then 
    there exists $\vec{y} \in \Z^{3n}$ such that \(\norm{\mat{A}((1/2 \cdot \vec{1}) - \vec{y})}_p \leq \gamma r \), and so $\val(\phi) \geq \delta$ by \cref{lem:correctness-soundness-2}. The result follows.

\end{proof}

We conclude by obtaining our main theorem (\cref{thm:intro-np-hardness}), which asserts that $\gamma$-$\BinGapCRP_p$ is $\NP$-hard for any constant $1 \leq \gamma < \gamma(p)$, as a corollary of \cref{thm:nae3sat-to-bingapcrp}.

\begin{proof}[Proof of \cref{thm:intro-np-hardness}]
Let $\gamma(p) := \gamma(15/16, 1, p)$, where $\gamma(\cdot, \cdot, \cdot)$ is as defined in \cref{eq:hardness-ratio}.
Then
\begin{align*}
\gamma(p)
&= \Big( \frac{(15/16) \cdot 10/2^p + (1/16) \cdot ((3/2)^p + 9/2^p)}{(4/3)^p + 3(2 + (4/3)^p)}\Big)^{1/p} \\
&= \Big(\frac{9^p + 159 \cdot 3^p}{8^{p+2} + 96 \cdot 6^p}\Big)^{1/p} \ \text{,} 
\end{align*}

By \cref{thm:nae-e3-sat-hard}, \( (15/16+\eps', 1) \)-NAE-E3-SAT is \( \NP \)-hard for every constant \( \eps' > 0 \).
So, by \cref{thm:nae3sat-to-bingapcrp}, $\gamma(15/16+\eps', 1, p)$-$\BinGapCRP$ is $\NP$-hard for every such $\eps'$.
The theorem follows by noting that for every constant $\eps > 0$ there exists a constant $\eps' > 0$ such that $\gamma(15/16+\eps', 1, p) \geq \gamma(p) - \eps$. %
\end{proof}

\section{\texorpdfstring{\( \Pi_2 \)}{Pi\_2}-hardness of approximating BinaryGapCRP in the \texorpdfstring{\( \ell_\infty\)}{ell\_infty} norm}

In this section, we show that the $\Pi_2$-hardness of approximation result for $\LinDisc_{\infty}$ given in Manurangsi~\cite{manurangsi21} can be extended to \( \BinGapCRP_{\infty} \).
We first give the definition of $\AENAEEthreeSAT$, the $\Pi_2$-hard problem that we reduce to approximate $\BinGapCRP$. 
$\AENAEthreeSAT$ (in which constraints may have either two or three literals) was originally shown to be $\Pi_2$-hard in \cite{EiterGottlob1995Eigenvector}, and (a very special case of) $\AENAEEthreeSAT$ was shown to be $\Pi_2$-hard in \cite[Theorem 4.7]{docker-AENAE-E3-SAT}.

\begin{definition}[$\AENAEEthreeSAT$]
    The $\AENAEEthreeSAT$ problem is the decision problem where, on input a $\NAEEthreeSAT$ formula $\phi(V_A, V_E)$ defined over the disjoint union $V_A \sqcup V_E$ of variable sets $V_A$ and $V_E$, the goal is to distinguish between the following two cases. 
    \begin{itemize}
        \item (YES instance.) For every assignment \( \psi_A  : V_A \to \bit \), there exists an assignment \( \psi_E : V_E \to \bit \) such that $(\psi_A, \psi_E)$ satisfies $\phi$.
        
        \item (NO instance.) There exists an assignment \( \psi_A : V_A \to \bit \) such that for all assignments \( \psi_E : V_E \to \bit )\), $(\psi_A, \psi_E)$ does not satisfy $\phi$.
    \end{itemize}
\end{definition}

Let $\phi(V_A, V_E)$ be a $\NAEEthreeSAT$ formula with \( n' = \abs{V_A} \) and \(  n - n' = \abs{V_E} \).
We define the matrix $\mat{A}'= \mat{A}'(\phi)$ (originally defined in~\cite{manurangsi21}) as
\begin{equation}
    \label{eqn:def-matAprime}
    \mat{A}' := 
    \left(\begin{array}{c|c|c|c|c|c|c}
      \multicolumn{6}{c|}{\mat{A}_{\infty}(\phi)} & \mat{0}_{n' \times n'} \\
      \hline 
      \frac{2}{3} \mat{I}_{n'} & \mat{0}_{n' \times (n - n')} & \frac{2}{3} \mat{I}_{n'} & \mat{0}_{n' \times (n - n')} & \frac{2}{3} \mat{I}_{n'} & \mat{0}_{n' \times (n - n')} & -2\mat{I}_{n'} \\
      \hline
      \multicolumn{6}{c|}{\mat{0}_{n' \times 3n}} & \frac{8}{3}\mat{I}_{n'}  \\
    \end{array}\right)
    \in \R^{(m + 3n + 2n') \times (3n + n')} \ \text{,}
\end{equation}
where $ \mat{A}_{\infty}(\phi)$ is as defined in \cref{eq:def-matA} and where we have used the notation $\mat{0}_{m \times n}$ to denote an $m \times n$ matrix of $0$s.

We next give a simple expression for $\norm{\mat{A}'(\vec{y}')}_\infty$ for a vector $\vec{y}'$ in terms of the blocks of $\mat{A}'$.
\begin{lemma}[{\cite[Equation 4]{manurangsi21}}]
    \label{lem:pi2-structural-identity}
    
    Let \( \phi(V_A, V_E) \) be an $\AENAEEthreeSAT$ instance with \( n' \) universal variables (variables in \( V_A \)) and \( n - n' \) existential variables (variables in \( V_E \)), and for each \( i \in [3] \), let \( \vec{y}^i \in \R^n \), and let \( \vec{y}^* \in \R^{n'} \). Furthermore, let \( \vec{y}' := (\vec{y}^1, \vec{y}^2, \vec{y}^3, \vec{y}^*) = (\vec{y}, \vec{y}^*) \in \R^{3n + n'} \). Finally, let \( \vec{y}^\text{sum} = \vec{y}^1 + \vec{y}^2 + \vec{y}^3 \). Then $\mat{A}' = \mat{A}'(\phi)$ satisfies
    \begin{equation}\label{eqn:structural-identity-pi2}
    \norm{\mat{A}'\vec{y}'}_\infty = 
    \max \set{\norm*{\mat{A}\vec{y}}_\infty, \max_{i \in [n']}\abs{(2/3) \cdot y^\text{sum}_i - 2 y^*_i}, (8/3) \cdot  \max_{i \in [n']} 
    \abs{y^*_i}}
\end{equation}
\end{lemma}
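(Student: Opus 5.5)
The identity should follow by computing $\mat{A}'\vec{y}'$ block row by block row and using the fact that the $\ell_\infty$ norm of a vector split into blocks of coordinates is the maximum of the $\ell_\infty$ norms of the blocks. I will split $\mat{A}'$ into its three horizontal block rows: the first $m+3n$ rows, the next $n'$ rows, and the final $n'$ rows. I will compute the contribution of each block to $\mat{A}'\vec{y}'$ separately.

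\textbf{First block row.} This block is $(\mat{A}_\infty(\phi) \mid \mat{0}_{(m+3n)\times n'})$. The first $3n$ columns of $\mat{A}'$ correspond to $\vec{y} = (\vec{y}^1,\vec{y}^2,\vec{y}^3)$, so this block of $\mat{A}'\vec{y}'$ equals $\mat{A}_\infty(\phi)\vec{y} = \mat{A}\vec{y}$. Its $\ell_\infty$ norm is therefore $\norm{\mat{A}\vec{y}}_\infty$.

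\textbf{Second block row.} Each $\vec{y}^k \in \R^n$ is indexed with the $n'$ universal variables first, followed by the $n-n'$ existential ones. Hence $(\tfrac23 \mat{I}_{n'} \mid \mat{0}_{n'\times(n-n')})\vec{y}^k$ is $\tfrac23$ times the first $n'$ coordinates of $\vec{y}^k$. Summing over $k\in[3]$ and adding $-2\mat{I}_{n'}\vec{y}^*$, the $i$th coordinate of this block is $\tfrac23 y^{\text{sum}}_i - 2y^*_i$ for $i\in[n']$.

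\textbf{Third block row.} This block is $(\mat{0}_{n'\times 3n} \mid \tfrac83\mat{I}_{n'})$, so it gives the vector $\tfrac83\vec{y}^*$. Its $\ell_\infty$ norm is $\tfrac83\max_i\abs{y^*_i}$.

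Taking the maximum of the three block norms gives \cref{eqn:structural-identity-pi2}.

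The only step needing care is the second block row. There I need to be explicit that the column ordering of $\mat{A}'$ in \cref{eqn:def-matAprime}, with the universal variables listed first within each of the three copies, matches the indexing of $\vec{y}^1,\vec{y}^2,\vec{y}^3$. This ordering is what makes the $i$th coordinate of that block involve exactly $y^{\text{sum}}_i$ for $i \in [n']$. Beyond this bookkeeping, the proof is a direct calculation.
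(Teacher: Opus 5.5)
Your proof is correct and matches the paper's, which simply says the identity follows by inspecting the blocks of $\mat{A}'\vec{y}'$. Your block-by-block computation makes that inspection explicit. Along the way you correctly read the top-right zero block as $(m+3n)\times n'$ (the paper's $\mat{0}_{n'\times n'}$ is a dimensional typo), and you flag that the universal variables are indexed first within each of the three copies.
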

\begin{proof}
The claim follows by inspection of the blocks of $\mat{A}'\vec{y}'$.
\end{proof}

We next state the completeness and soundness results from the reduction in \cite{manurangsi21} for showing $\Pi_2$-hardness of $\lindisc_{\infty}(A)$.
\begin{lemma}[Completeness for $\Pi_2$-hardness reduction, \cite{manurangsi21}] \label{lem:completeness-Pi2-manurangsi}
    Let \( \phi \) be an $\AENAEEthreeSAT$ instance, and \( \mat{A}' := \mat{A}'(\phi) \) as defined in \cref{eqn:def-matAprime}. Suppose that \( \phi \) is a YES instance. Then $\lindisc_\infty(\mat{A}') \leq 4/3$.\end{lemma}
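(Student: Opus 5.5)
We need $\lindisc_\infty(\mat{A}')\le 4/3$ in the YES case. Fix an arbitrary $\vec{w}'=(\vec{w}^1,\vec{w}^2,\vec{w}^3,\vec{w}^*)\in[0,1]^{3n+n'}$; we must produce $\vec{x}'=(\vec{x}^1,\vec{x}^2,\vec{x}^3,\vec{x}^*)\in\bit^{3n+n'}$ with $\norm{\mat{A}'(\vec{w}'-\vec{x}')}_\infty\le 4/3$. By \cref{lem:pi2-structural-identity}, with $\vec{y}:=\vec{w}-\vec{x}$ and $\vec{y}^*:=\vec{w}^*-\vec{x}^*$, this means bounding three terms by $4/3$: $\norm{\mat{A}_\infty\vec{y}}_\infty$, $\abs{(2/3)y^{\mathrm{sum}}_i-2y^*_i}$ for universal $i$, and $(8/3)\abs{y^*_i}$.

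\textbf{Step 1 (the $\vec{w}^*$ block fixes the universal assignment).} For each $i\in[n']$ set $x^*_i:=\mathrm{round}(w^*_i)$, so $\abs{y^*_i}\le 1/2$. This gives only $(8/3)\abs{y^*_i}\le 4/3$. It also leaves $y^*_i$ with a definite sign $s_i\in\pmo$; when $y^*_i=0$ either sign may be chosen. We then define the universal assignment $\psi_A(v_i):=(1-s_i)/2$, so that the sign convention matches $b=1-2\psi(v_i)$ in the proof of \cref{lem:correctness-completeness}. Since $\phi$ is a YES instance, there is $\psi_E$ with $\psi:=(\psi_A,\psi_E)$ satisfying every constraint.

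\textbf{Step 2 (the $\mat{A}_\infty$ block).} For each $j\in[n]$, apply \cref{lem:gadget-properties} to $\vec{u}:=\vec{w}_j$ with $b:=1-2\psi(v_j)$ to obtain $\vec{x}_j$. Exactly as in the proof of \cref{lem:correctness-completeness}, in the $\ell_\infty$ version, every gadget coordinate has magnitude at most $4/3$. For each constraint row $i$ we have $\abs{\vec{b}_i\vec{y}^{\mathrm{sum}}}\le 4$, because $\psi$ satisfies $C_i$ and so $\sgndiff$ holds; the row of $\mat{A}_\infty$ is then at most $4/3$. Here $\mat{D}_\infty=\mat{I}$, so no degree weights enter.

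\textbf{Step 3 (the coupling rows; main obstacle).} For universal $i$ we need $\abs{(2/3)y^{\mathrm{sum}}_i-2y^*_i}\le 4/3$. By \cref{lem:gadget-properties}, $y^{\mathrm{sum}}_i\in[-2,2]$, and its sign agrees with $b=1-2\psi_A(v_i)=s_i$, which is also the sign of $y^*_i$. So $(2/3)y^{\mathrm{sum}}_i\in[-4/3,4/3]$ and $2y^*_i\in[-1,1]$ have the same sign, and their difference has magnitude at most $\max(4/3,1)=4/3$. This is precisely why $\psi_A$ was tied to the rounding direction of $\vec{w}^*$.

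The delicate points are the sign bookkeeping at zero values and checking that the $\ell_\infty$ version of Item 1 of \cref{lem:gadget-properties} indeed gives the coordinate bound $4/3$. The case bounds in its proof show this directly.
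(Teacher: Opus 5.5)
Your proof is correct and is essentially the completeness argument of Manurangsi's Lemma~5, which the paper simply cites. You round $\vec{w}^*$, let the rounding direction fix the universal assignment (so $\psi_A(v_i)=x^*_i$), extend it by the YES property, and apply \cref{lem:gadget-properties} with $b=1-2\psi(v_j)$. This bounds the constraint, gadget, and coupling rows by $4/3$, the coupling rows via sign agreement of $y^{\mathrm{sum}}_i$ and $y^*_i$.

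The only point to tighten is your closing remark that the case bounds in the proof of \cref{lem:gadget-properties} show the $\ell_\infty$ bound directly. That case analysis picks $\vec{z}$ independently of $b$, so it only realizes $b=+1$. For $b=-1$, apply it to $\vec{1}-\vec{u}$ and replace the resulting $\vec{z}$ by $\vec{1}-\vec{z}$. This negates $\vec{u}-\vec{z}$, so it preserves the $4/3$ bound while flipping the sign of $\vec{1}^T(\vec{u}-\vec{z})$.
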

\begin{proof}
The claim follows from the completeness argument in \cite[Proof of Lemma 5]{manurangsi21}. 
\end{proof}

\begin{lemma}[Soundness for $\Pi_2$-hardness reduction, \cite{manurangsi21}] \label{lem:soundness-Pi2-manurangsi}
Let \( \phi(V_A, V_E) \) be an $\AENAEEthreeSAT$ instance with $n' := \card{V_A}$ universally quantified variables and \( n := \card{V_A} + \card{V_E} \) variables, and let $\mat{A}' = \mat{A}'(\phi)$ be as defined in \cref{eqn:def-matAprime}. 
Suppose that for every $\vec{w}' := (\vec{w}, \vec{w}^*)$ for $\vec{w} := 1/2 \cdot \vec{1} \in \R^{3n}$ and $\vec{w}^* \in \set{1/3, 2/3}^n$ there exists a vector $\vec{x}' \in \bit^{3n + n'}$ such that $\norm{\mat{A}' (\vec{w}' - \vec{x}')}_{\infty} < 3/2$. Then $\phi$ is a YES instance of $\AENAEEthreeSAT$.
\end{lemma}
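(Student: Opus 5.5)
The plan is to fix an arbitrary universal assignment $\psi_A : V_A \to \bit$, encode it in the choice of $\vec{w}^*$, and then read a satisfying existential assignment $\psi_E$ off the binary close vector $\vec{x}'$ that the hypothesis provides. Concretely, for each $i \in [n']$ I would set $w^*_i := 1/3$ if $\psi_A(v_i) = 0$ and $w^*_i := 2/3$ if $\psi_A(v_i) = 1$ (reading $\vec{w}^* \in \set{1/3,2/3}^{n'}$, the dimension that matches $\mat{A}'$). Let $\vec{w}' := (\tfrac12 \vec{1}, \vec{w}^*)$. The hypothesis then gives $\vec{x}' = (\vec{x}, \vec{x}^*) \in \bit^{3n+n'}$ with $\norm{\mat{A}'(\vec{w}'-\vec{x}')}_\infty < 3/2$. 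I would apply \cref{lem:pi2-structural-identity} with $\vec{y}' := \vec{w}' - \vec{x}'$, so that each of the three blocks is separately bounded by $3/2$.

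The first step uses the $\mat{A}_\infty(\phi)$ block. Since $\mat{D}_\infty$ is the identity (assuming every variable occurs at least once), each gadget row gives $\norm{\mat{G}(\tfrac12\vec{1} - \vec{x}_j)}_\infty < 3/2$. Then \cref{lem:gadget-property-soundness}, \cref{item:G-soundness-non01}, forces $\vec{x}_j \in \set{\vec{0},\vec{1}}$ for every $j \in [n]$. So $\vec{x} = (\vec{x}'', \vec{x}'', \vec{x}'')$ for some $\vec{x}'' \in \bit^n$, and $x^{\mathrm{sum}}_j = 3x''_j \in \set{0,3}$.

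The second step pins the universal coordinates. The bottom block gives $(8/3)\abs{w^*_i - x^*_i} < 3/2$, i.e.\ $\abs{w^*_i - x^*_i} < 9/16$. Since $2/3 > 9/16$, this forces $x^*_i = 0$ when $w^*_i = 1/3$ and $x^*_i = 1$ when $w^*_i = 2/3$. The middle block gives $\abs{(2/3)(3/2 - x^{\mathrm{sum}}_i) - 2(w^*_i - x^*_i)} < 3/2$.
\begin{itemize}
\item If $w^*_i = 1/3$, this reads $\abs{1/3 - (2/3)x^{\mathrm{sum}}_i} < 3/2$. This excludes $x^{\mathrm{sum}}_i = 3$ (value $5/3$), so $x''_i = 0 = \psi_A(v_i)$.
\item If $w^*_i = 2/3$, this reads $\abs{5/3 - (2/3)x^{\mathrm{sum}}_i} < 3/2$. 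This excludes $x^{\mathrm{sum}}_i = 0$, so $x''_i = 1 = \psi_A(v_i)$.
\end{itemize}
Hence $\vec{x}''$ restricted to $V_A$ agrees with $\psi_A$, and I define $\psi_E(v_j) := x''_j$ for $v_j \in V_E$.

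The final step uses the top rows of $\mat{A}_\infty(\phi)$. These give $\abs{\tfrac13 \vec{b}_i(\tfrac32\vec{1} - 3\vec{x}'')} = \abs{\vec{b}_i(\tfrac12\vec{1} - \vec{x}'')} < 3/2$. By \cref{eqn:upstairs-x}, this quantity is $1/2$ or $3/2$, so it must be $1/2$, i.e.\ $\sgndiff(\vec{b}_i, \tfrac12\vec{1} - \vec{x}'')$ holds. The sign-characterization argument at the end of the proof of \cref{lem:correctness-soundness-2} then shows that $(\psi_A,\psi_E)$ satisfies $C_i$, for every $i$. Since $\psi_A$ was arbitrary, $\phi$ is a YES instance.

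The main obstacle is the middle-block calculation. One has to check that the constants $2/3$, $-2$ and $8/3$ leave exactly one admissible value of $x^{\mathrm{sum}}_i$ once $x^*_i$ is pinned. This only works because the gadget step has already restricted $x^{\mathrm{sum}}_i$ to $\set{0,3}$, so that step must come first.
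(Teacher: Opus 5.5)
Your proposal is correct and is essentially the soundness argument from Manurangsi's proof of Lemma 5, which the paper simply cites, written in the direct rather than the contrapositive direction. The steps match: the gadget rows force $\vec{x}_j \in \{\vec{0},\vec{1}\}$ via \cref{lem:gadget-property-soundness}, the $\frac{8}{3}\mat{I}_{n'}$ rows pin $\vec{x}^*$ to the rounding of $\vec{w}^*$, the middle rows (gap $5/3 \geq 3/2$) force $\vec{x}''$ to agree with $\psi_A$ on $V_A$, and the $\frac{1}{3}\mat{B}$ rows then force every constraint to be NAE-satisfied. Your reading of the typo $\{1/3,2/3\}^n$ as $\{1/3,2/3\}^{n'}$ and your caveat about degree-zero variables are both harmless, and your first two steps coincide with the paper's own \cref{lem:close-to-wprime-linfinity}.
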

\begin{proof}
The claim follows from the soundness argument in \cite[Proof of Lemma 5]{manurangsi21}. 
\end{proof}

We now state a lemma that guarantees that if a vector $\vec{w}'$ of the form in \cref{lem:soundness-Pi2-manurangsi} and $\vec{x}' \in \Z^{3n + n'}$ satisfy \( \norm{\mat{A}'(\vec{w}' - \vec{x}')}_\infty < 3/2 \), then it must be the case that \( \vec{x}' \in \bit^{3n + n'} \). %

\begin{lemma} \label{lem:close-to-wprime-linfinity}
Let \( \phi(V_A, V_E) \) be an $\AENAEEthreeSAT$ instance with $n' := \card{V_A}$ universally quantified variables and \( n := \card{V_A} + \card{V_E} \) variables, and let $\mat{A}' = \mat{A}'(\phi)$ be as defined in \cref{eqn:def-matAprime}. Furthermore, let $\vec{w} := 1/2 \cdot \vec{1} \in \R^{3n}$, let $\vec{w}^* \in \set{1/3, 2/3}^{n'}$, let \( \vec{x} \in \Z^{3n} \), and let $\vec{x}^* \in \Z^{n'}$.
Let $\vec{w}' := (\vec{w}, \vec{w}^*) \in \R^{3n + n'}$ and let $\vec{x}' := (\vec{x}, \vec{x}^*) \in \Z^{3n + n'}$. 
Assume that $\norm{\mat{A}' (\vec{w}' - \vec{x}')}_{\infty} < 3/2$. Then $\vec{x}' \in \bit^{3n + n'}$.
\end{lemma}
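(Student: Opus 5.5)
We read off the blocks of $\mat{A}'(\vec{w}' - \vec{x}')$ using \cref{lem:pi2-structural-identity} with $\vec{y}' := \vec{w}' - \vec{x}'$, i.e.\ $\vec{y} := \tfrac12 \vec{1} - \vec{x}$ and $\vec{y}^* := \vec{w}^* - \vec{x}^*$. The hypothesis $\norm{\mat{A}'\vec{y}'}_\infty < 3/2$ then gives three separate constraints:
\[
\norm{\mat{A}_\infty(\phi)\vec{y}}_\infty < 3/2, \qquad
\max_{i}\abs{(2/3) y^{\text{sum}}_i - 2y^*_i} < 3/2, \qquad
(8/3)\max_i \abs{y^*_i} < 3/2 .
\]
We use the first to handle $\vec{x}$ and the third to handle $\vec{x}^*$; the middle block should not be needed.

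\emph{The coordinates $\vec{x}^*$.} The third constraint gives $\abs{w^*_i - x^*_i} < 9/16$ for each $i \in [n']$. Since $w^*_i \in \set{1/3, 2/3}$ and $x^*_i \in \Z$, this forces $x^*_i \in \set{0,1}$. For example, if $w^*_i = 1/3$, then $x^*_i = -1$ gives distance $4/3$ and $x^*_i = 2$ gives $5/3$, both exceeding $9/16$. The case $w^*_i = 2/3$ is symmetric.

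\emph{The coordinates $\vec{x}$.} The block $\mat{A}_\infty(\phi)$ contains the rows of $\mat{G} \otimes \mat{D}_\infty$. For $p = \infty$ we have $\mat{D}_\infty = \mat{I}_n$, interpreting $\deg(v_j)^{1/\infty} = 1$ for every variable. It is natural to assume every variable occurs in some constraint, since otherwise the corresponding columns are degenerate; in any case $\mat{D}_\infty = \mat{I}_n$ under the paper's convention. Consequently $\norm{\mat{A}_\infty \vec{y}}_\infty \ge \max_j \norm{\mat{G}(\tfrac12 \vec{1} - \vec{x}_j)}_\infty$, where $\vec{x}_j := (x^1_j, x^2_j, x^3_j)$. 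By \cref{lem:gadget-property-soundness}, \cref{item:G-soundness-non01}, if $\vec{x}_j \in \Z^3 \setminus \set{\vec{0}, \vec{1}}$ then this norm is at least $3/2$, contradicting the hypothesis. Hence $\vec{x}_j \in \set{\vec{0}, \vec{1}} \subseteq \bit^3$ for every $j$, so $\vec{x} \in \bit^{3n}$. Combining the two parts gives $\vec{x}' \in \bit^{3n+n'}$.

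The only delicate point is the bookkeeping. We must confirm that $\mat{A}_\infty$ uses $\mat{D}_\infty = \mat{I}_n$, so that the gadget rows appear unscaled, and that the coordinate ordering of $\vec{x}_j$ matches the Kronecker structure $\mat{G} \otimes \mat{I}_n$. This is the same grouping used in \cref{lem:structural-identity}, so it is routine.
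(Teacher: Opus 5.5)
Your proposal is correct and takes essentially the same approach as the paper. You bound each gadget block $\norm{\mat{G}(\tfrac12\vec{1} - \vec{x}_j)}_\infty$ by $\norm{\mat{A}'(\vec{w}'-\vec{x}')}_\infty < 3/2$ and invoke \cref{lem:gadget-property-soundness} to force $\vec{x}_j \in \set{\vec{0},\vec{1}}$, then use the $(8/3)\abs{w_i^* - x_i^*} < 3/2$ rows to get $\vec{x}^* \in \bit^{n'}$; the paper likewise never uses the middle block.
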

\begin{proof}
We have that
\begin{equation} \label{eq:close-to-wprime-linfinity}
    \max_{j \in [n]}\norm{\mat{G}(\vec{w}_j - \vec{x}_j)}_\infty
    \leq \norm{\mat{A}(\vec{w} - \vec{x})}_\infty \leq \norm{\mat{A}'(\vec{w}' - \vec{x}')}_\infty  < 3/2
         \ \text{,}
\end{equation}
where the first inequality follows from the definition of $\mat{A}$ (\cref{eq:def-matA}), the second inequality follows from \cref{lem:pi2-structural-identity}, and the third inequality is an assumption.
Let $\vec{x}^1, \vec{x}^2, \vec{x}^3 \in \Z^n$ be such that $\vec{x} := (\vec{x}^2, \vec{x}^2, \vec{x}^3)$, and for $j \in [n]$, let $\vec{x}_j := (x^1_j, x^2_j, x^3_j) \in \Z^3$.
By \cref{eq:close-to-wprime-linfinity} we then have that for all \( j \in [n] \), \( \norm{\mat{G}(1/2 \cdot \vec{1} - \vec{x}_j)}_\infty < 3/2 \).
Therefore, by \cref{lem:gadget-property-soundness} we have have that for all \( j \in [n] \), \( \vec{x}_j \in \set{\vec{0}, \vec{1}} \). In particular, $\vec{x} \in \bit^{3n}$.

Again, by setting \( \vec{y}' := \vec{w}' - \vec{x}' \) in \cref{lem:pi2-structural-identity}, we have that \( (8/3) \cdot \max_{i \in [n']} \abs{w^*_i - x^*_i} < 3/2 \), or equivalently, that for all \( i \in [n'] \), \( \abs{w^*_i - x^*_i} < 9/16 < 2/3 \). It follows that if $w_i^* = 1/3$ then $x_i^* = 0$, and if $w_i^* = 2/3$ then $x_i^* = 1$. In particular, \( \vec{x}^* \in \bit^{n'} \), as needed.
\end{proof}

Finally, we obtain the following theorem that adapts \cite[Lemma 5]{manurangsi21} to the \( \BinGapCRP_\infty \) problem.
\begin{theorem} \label{thm:AENAE3SAT-to-BinGapCRP}
    For any constant $\eps > 0$, there is a polynomial-time mapping reduction from $\AENAEEthreeSAT$ to $(9/8 - \eps)$-\( \BinGapCRP_\infty \).
\end{theorem}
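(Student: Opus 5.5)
The reduction is the one implicit in the preceding lemmas. On input an $\AENAEEthreeSAT$ instance $\phi(V_A, V_E)$, it outputs the matrix $\mat{A}' = \mat{A}'(\phi)$ from \cref{eqn:def-matAprime} and the threshold $r := 4/3$. Clearly $\mat{A}'$ is computable in polynomial time. It has full column rank: the $\mat{G}\otimes \mat{D}_\infty$ block is invertible on the first $3n$ coordinates, since $\mat{G}$ is invertible with $\det \mat{G} = -4$ and every $\deg(v_j)\ge 1$ (we may drop variables of degree $0$). The $\frac{8}{3}\mat{I}_{n'}$ block handles the last $n'$ coordinates. So $\mat{A}'$ is a legitimate basis. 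We then verify completeness and soundness with $\gamma r = (9/8-\eps)\cdot 4/3 = 3/2 - 4\eps/3 < 3/2$.

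\textbf{Completeness.} If $\phi$ is a YES instance, \cref{lem:completeness-Pi2-manurangsi} gives $\lindisc_\infty(\mat{A}') \le 4/3 = r$ directly. So the output is a YES instance of $\BinGapCRP_\infty$.

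\textbf{Soundness.} We argue the contrapositive. Suppose $\mu_\infty(\lat(\mat{A}')) \le \gamma r < 3/2$. Then every target in the span is within distance $<3/2$ of the lattice. In particular, for every $\vec{w}' = (\frac12\vec{1}, \vec{w}^*)$ with $\vec{w}^* \in \set{1/3,2/3}^{n'}$, there is $\vec{x}' \in \Z^{3n+n'}$ with $\norm{\mat{A}'(\vec{w}'-\vec{x}')}_\infty < 3/2$. By \cref{lem:close-to-wprime-linfinity}, any such $\vec{x}'$ lies in $\bit^{3n+n'}$. The hypothesis of \cref{lem:soundness-Pi2-manurangsi} therefore holds, and $\phi$ is a YES instance. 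Hence if $\phi$ is a NO instance, $\mu_\infty(\lat(\mat{A}')) > \gamma r$, as required.

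\textbf{Main obstacle.} The substantive difficulty has already been absorbed into \cref{lem:close-to-wprime-linfinity}: upgrading Manurangsi's soundness, which only concerns binary coefficient vectors, to arbitrary integer coefficient vectors. That lemma relies on \cref{lem:gadget-property-soundness} showing $\vec{0},\vec{1}$ are the only lattice points of $\lat(\mat{G})$ within $\ell_\infty$ distance $<3/2$ of $\frac12\vec{1}$, and on the $\frac83\mat{I}$ block pinning down $\vec{x}^*$. What remains to check carefully here is bookkeeping:

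1. The $\ell_\infty$ covering radius is a maximum over all targets, so it bounds the distance of the specific targets $\vec{w}'$. This follows from \cref{eq:covrad-prelims}.
2. The strict inequality $\gamma r < 3/2$ is compatible with the strict inequality required in \cref{lem:soundness-Pi2-manurangsi}. It is, for any fixed $\eps>0$.
3. The $\vec{w}^*$ range in \cref{lem:soundness-Pi2-manurangsi} should be $\set{1/3,2/3}^{n'}$, matching \cref{lem:close-to-wprime-linfinity}.
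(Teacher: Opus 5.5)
Your proposal is correct and follows the paper's proof essentially verbatim. It uses the same instance $(\mat{A}', r = 4/3)$, gets completeness from \cref{lem:completeness-Pi2-manurangsi}, and argues soundness by contrapositive, using $\gamma r < 3/2$ and \cref{lem:close-to-wprime-linfinity} to force integer close vectors to be binary before invoking \cref{lem:soundness-Pi2-manurangsi}; your check that $\mat{A}'$ has full column rank is a small detail the paper leaves implicit.
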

\begin{proof}
    Let \( \phi \) be an instance of the $\AENAEEthreeSAT$ problem. The reduction outputs the $\BinGapCRP$ instance with matrix \( \mat{A'} := \mat{A'}(\phi) \), as defined in \cref{eqn:def-matAprime}, and distance threshold $r := 4/3$.
    The reduction is efficient by inspection. 
     If \( \phi \) is a YES instance, then \( \lindisc_\infty(\mat{A}') \leq 4/3 \) by \cref{lem:completeness-Pi2-manurangsi}.
     
     On the other hand, if $\mu_{\infty}(\lat(\mat{A}')) < 3/2$, then by definition of the covering radius and  \cref{lem:close-to-wprime-linfinity} we have that for all $\vec{w}' := (\vec{w}, \vec{w}^*) \in \R^{3n + n'}$ with $\vec{w} := 1/2 \cdot \vec{1}^{3n}$ and $\vec{w}^* \in \set{1/3, 2/3}^{n'}$, there exists \( \vec{x}' \in \bit^{3n + n'} \) such that $\norm{\mat{A}' (\vec{w}' - \vec{x}')}_{\infty} < 3/2$. 
     Therefore, $\phi$ is a YES instance by \cref{lem:soundness-Pi2-manurangsi}. The result follows.
\end{proof}

The proof of \cref{thm:intro-pi2-hardness} now follows easily.

\begin{proof}[Proof of \cref{thm:intro-pi2-hardness}]
The result follows from the $\Pi_2$-hardness of $\AENAEEthreeSAT$ (shown in  \cite[Theorem 4.7]{docker-AENAE-E3-SAT}) and the reduction in \cref{thm:AENAE3SAT-to-BinGapCRP}.
\end{proof}

\bibliography{covering}
\bibliographystyle{alphaabbrvprelim}

\appendix
\section{NP-hardness of \texorpdfstring{\( (15/16 + \eps, 1) \)}{(15/16 + eps, 1)}-NAE-E3-SAT}
\label{sec:nae-e3-sat-hardness}

We start with the following theorem of H\r{a}stad, which shows that certain $4$-ary CSPs are hard to approximate.\footnote{In fact, \cite[Theorem 7.18]{hastad} is stated for CSPs over $\pmo$ rather than $\bit$.}

\begin{theorem}[{\cite[Theorem 7.18]{hastad}}] \label{thm:hastad-4ary-csp}
    Let $P : \bit^4 \to \bit$ be a predicate such that
    \[
    P^{-1}(0) \subseteq \set{(0,0,0,0), (0,0,1,1), (1,1,0,0), (1,1,1,1)} \ \text{,}
    \] 
    and let $\CSP_P$ denote the class of $\CSP$s where each constraint is $P$ applied to four distinct literals (Boolean variables $x_1, \ldots, x_n$ or their negations).
    Let $\alpha := \E_{\tau \sim \bit^4}[P(\tau) = 1]$.
    Then for every constant $\eps \in (0, 1 - \alpha]$, $(\alpha + \eps, 1)$-$\CSP_P$ is $\NP$-hard. 
\end{theorem}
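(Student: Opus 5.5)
Since this is a result of H\r{a}stad that the paper imports, my plan is to reconstruct the standard PCP-based argument rather than derive it from earlier parts of the paper. I will reduce from gap Label Cover: bipartite instances with projection constraints $\pi_e : L \to K$ that are either fully satisfiable or at most $\eta$-satisfiable, for arbitrarily small constant $\eta$. Such instances are obtained from the PCP theorem together with parallel repetition. Each vertex $u$ of the large side is encoded by a long code $f_u : \bit^{L} \to \bit$, and each vertex $v$ of the small side by $g_v : \bit^{K} \to \bit$. All tables will be folded, i.e.\ $f(\lnot x) = \lnot f(x)$, by reading one of each pair of antipodal points and negating. This is where negated literals enter the CSP. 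Each test of the verifier becomes a constraint $P$ on four (distinct, after small perturbations) table entries.

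The test I would design first, for a random edge $(u,v)$ with projection $\pi$, is as follows. Sample uniform $x, y \in \bit^K$ and $x', y' \in \bit^L$. Query the four bits $g(x)$, $f(x' )$, $g(y)$, $f(y')$. The entries of $x'$ and $y'$ are correlated with $x\circ\pi$ and $y \circ \pi$ in such a way that, for dictator (long-code) tables, the tuple $(g(x), f(x'), g(y), f(y'))$ never lands in $\{0000,0011,1100,1111\}$. Concretely, I would set $x' := (x\circ \pi) \oplus \mu$ and $y' := (y \circ \pi) \oplus \mu \oplus (\text{something})$. Here $\mu$ is a noise vector chosen so that a dictator forces either the first two bits to differ or the last two bits to differ. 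Since $P^{-1}(0)$ is contained in the set of tuples whose first two bits agree and whose last two bits agree, this gives perfect completeness for every admissible $P$. I also need to keep each individual query (and suitable pairs of queries) uniformly distributed. With that, the "random" acceptance probability is exactly $\alpha$, and in soundness every term of the Fourier expansion of $P$ of degree at most one vanishes by folding.

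For soundness, I arithmetize the acceptance probability as $\E[P(\cdot)]$ and expand it via the Fourier transform of $P$ over $\{\pm1\}^4$. This equals $\alpha$ plus a sum of expectations of products of two, three or four of the table values. I will expand each such product in the Fourier bases of $f$ and $g$. Terms of degree one vanish by folding; pairwise terms vanish or are negligible because the relevant pairs are independent. The remaining terms become sums of the form $\sum_{\beta} \hat g_{\pi_2(\beta)} \hat f_\beta^2 (1-2\rho)^{|\beta|}$, or products of such sums. The noise parameter $\rho$ damps large $|\beta|$.

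The main obstacle is this last step: bounding all such surviving terms by something that yields a decoding strategy. I would follow H\r{a}stad's $3$-bit analysis. If the surviving terms exceed $\eps$ in total, then with probability polynomial in $\eps$ (depending on $\rho$), labels sampled as $\beta \sim \hat f_\beta^2$ and $\alpha \sim \hat g_\alpha^2$ satisfy $\pi(\beta) \ni \alpha$. This yields a Label Cover labeling of value greater than $\eta$, a contradiction. The delicate points are choosing $\mu$ so that completeness is still perfect with noise, and using Cauchy--Schwarz over the edges to handle the four-wise products. Finally, I will convert the verifier into a $\CSP_P$ instance in which each constraint applies $P$ to four distinct literals (repeated queries are eliminated by negligible perturbation). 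The resulting gap is $1$ versus $\alpha + \eps$.
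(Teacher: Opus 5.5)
\textbf{There is a genuine gap: your concrete test is not sound.} The paper gives no proof of this statement (it is imported from H\r{a}stad), so your reconstruction has to stand on its own. The framework is right: gap Label Cover, folded long codes, Fourier expansion of $P$ with constant term $\alpha$, and decoding from squared coefficients. The problem is the test itself.

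With $x'=(x\circ\pi)\oplus\mu$, the small-side query $g(x)$ and the large-side query $f(x')$ are directly correlated:
\[
\mathbb{E}\bigl[(-1)^{g(x)+f(x')}\bigr]=\sum_{\beta}\hat g_{\pi_2(\beta)}\,\hat f_{\beta}\,(1-2\rho)^{|\beta|},
\]
and likewise for $(g(y),f(y'))$. So the degree-two part of the expansion of $P$ does not vanish, and your claim that ``the relevant pairs are independent'' is false for your own test. Because $\hat f_\beta$ appears unsquared here, there is no distribution over labels to decode from.

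This is a real failure, not a bookkeeping issue. Perfect completeness against dictators forces $(\mu_b,s_b)\neq(0,0)$ for every $b\in L$, where $s$ is your ``something''. With $\mu$ sparse noise, $s_b=1$ off the support of $\mu$, so $y'$ is a $\rho$-noisy copy of the complement of $y\circ\pi$. Now take an instance with $d$-to-$1$ projections and let every table be an odd-length majority function, which is folded. Then $f(y')\neq g(y)$ with probability $1-O(\sqrt{\rho})$. So every $P$ with $P^{-1}(0)\subseteq\{z: z_1=z_2,\ z_3=z_4\}$ accepts with probability close to $1$, although majorities encode no labels.

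Other choices of $(\mu,s)$ do not help. Some pair must disagree in every coordinate, so either one pair is biased towards disagreement, which majorities exploit in the same way, or $\mu$ is uniform and $s\equiv 1$. The latter is a noiseless parity check, passed by full-parity tables. Two further inconsistencies: $x'$ is not uniform, contrary to your own requirement, and your four-wise term has the shape $\sum_{\beta_1,\beta_2}\hat g_{\pi_2(\beta_1)}\hat g_{\pi_2(\beta_2)}\hat f_{\beta_1}\hat f_{\beta_2}(\cdots)$, not $\sum_\beta\hat g_{\pi_2(\beta)}\hat f_\beta^2(\cdots)$.

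\textbf{What is missing is H\r{a}stad's way of correlating the queries.} In his perfect-completeness tests, each query is individually uniform, no small-side query is directly correlated with a large-side query, and correlation enters only through a pointwise rule. For example, query $g(x), f(y_1), f(y_2), f(y_3)$, where $x, y_1, y_2$ are uniform and independent. Obtain $y_3$ from $y_2$ by flipping coordinate $b$ exactly when $x(\pi(b))=y_1(b)$, i.e., where a dictator would make the first pair agree, and apply $\delta$-noise only at the remaining coordinates. This keeps completeness perfect and makes every pair of queries independent except $(y_2,y_3)$, whose correlation is at most $\delta$ in absolute value by folding. After averaging over $y_2$, the surviving terms are weighted by $\hat f_\gamma^2$.

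Even then, bounding the three- and four-wise terms is the delicate part. H\r{a}stad's analyses of such tests rely on a specific property of his label cover: for a random $U\subseteq W$, large $\beta$ tend to have large projections. A generic gap does not supply this. Since the paper only cites H\r{a}stad here, citing is the appropriate course unless you reproduce his test and analysis in full.
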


The value $\alpha := \E_{\tau \sim \bit^4}[P(\tau) = 1]$ in \cref{thm:hastad-4ary-csp} is the expected value of a random assignment to a constraint involving $P$. Note that if $\phi(x_1, \ldots, x_n)$ is an instance of $\CSP_P$, then by linearity of expectation $\alpha$ is also the expected value of a random assignment to $\phi$ (i.e., $\alpha = \E_{\tau \sim \bit^n}[\val_{\tau}(\phi)]$).
So, the theorem says that it is $\NP$-hard to approximate $\CSP_P$ even slightly better than what is obtained from a random assignment.

We then obtain the following hardness result for $\NAEEfourSAT$ from \cref{thm:hastad-4ary-csp}.
\begin{corollary}
\label{cor:nae-e4-sat-hard}
    For any constant $\eps \in (0, 1/8)$, \( (7/8 + \eps, 1) \)-$\NAEEfourSAT$ is $\NP$-hard.
\end{corollary}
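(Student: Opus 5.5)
We apply \cref{thm:hastad-4ary-csp} with the predicate $P := \mathrm{NAE}_4$, defined by $P(\tau) = 0$ if and only if $\tau \in \set{(0,0,0,0),(1,1,1,1)}$.

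The first step is to check the hypothesis of the theorem. We have $P^{-1}(0) = \set{(0,0,0,0),(1,1,1,1)}$, and this set is contained in the allowed set $\set{(0,0,0,0),(0,0,1,1),(1,1,0,0),(1,1,1,1)}$. So $\mathrm{NAE}_4$ is an admissible predicate.

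The second step is to compute the random-assignment value
\[
\alpha = \E_{\tau \sim \bit^4}[P(\tau)=1] = 1 - 2/16 = 7/8 .
\]

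Third, we observe that $\CSP_P$ with this $P$ is, syntactically, exactly $\NAEEfourSAT$. Each constraint is $P$ applied to four distinct literals, and it is satisfied exactly when not all four literals take the same value. This matches the definition of $\NAEEfourSAT$ given in \cref{sec:prelims-csps}, including the requirement that the $k=4$ literals be distinct.

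Now fix a constant $\eps \in (0,1/8)$. Since $1 - \alpha = 1/8$, we have $\eps \in (0, 1-\alpha]$, so \cref{thm:hastad-4ary-csp} applies and shows that $(7/8+\eps, 1)$-$\CSP_P$ is $\NP$-hard. This problem is the same as $(7/8+\eps,1)$-$\NAEEfourSAT$, and the corollary follows.

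Two minor points need attention. The first is the encoding mismatch noted in the footnote: H\r{a}stad states his theorem over $\pmo$ rather than $\bit$. Under the bijection $b \mapsto (-1)^b$, the excluded patterns $\{0000, 0011, 1100, 1111\}$ map to the corresponding $\pm1$ patterns, and the NAE predicate is preserved, so nothing changes. The second point is the open versus closed endpoint of the range of $\eps$, which is harmless since the interval $(0,1/8)$ lies inside $(0,1/8]$. Beyond these, the proof is a direct instantiation of \cref{thm:hastad-4ary-csp}.
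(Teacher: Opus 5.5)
Your proposal is correct and matches the paper's proof: instantiate \cref{thm:hastad-4ary-csp} with $P^{-1}(0) = \set{(0,0,0,0),(1,1,1,1)}$, observe that $\CSP_P$ is exactly $\NAEEfourSAT$, and compute $\alpha = 7/8$. Your extra remarks on the $\pm 1$ encoding and on the range of $\eps$ are harmless sanity checks that the paper leaves implicit.
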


\begin{proof}
Set $P : \bit^4 \to \bit$ to be such that \( P^{-1}(0) := \set{(0,0,0,0), (1,1,1,1)} \), and note that $\NAEEfourSAT$ formulas are exactly the class of CSPs where each constraint is $P$ applied to four literals.
The result then follows by \cref{thm:hastad-4ary-csp} and the observation that $\E_{\tau \sim \bit^4}[P(\tau) = 1] = 7/8$ for such~$P$.
\end{proof}

We next give a reduction from approximate $\NAEEfourSAT$ to approximate $\NAEEthreeSAT$.

\begin{theorem}
    \label{thm:nae-e4-sat-to-nae-e3-sat}
    There is a polynomial-time reduction from NAE-E4-SAT instances \( \phi \) to NAE-E3-SAT instances \( \phi' \) such that \( \mathrm{val}(\phi') = \frac{1}{2} (1 + \mathrm{val}(\phi)) \).
\end{theorem}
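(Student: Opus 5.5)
We use the standard splitting gadget: a NAE-E4-SAT constraint $C = \set{\ell_1, \ell_2, \ell_3, \ell_4}$ becomes two NAE-E3-SAT constraints joined by a fresh auxiliary variable. Concretely, for each constraint $C$ of $\phi$ we introduce a fresh variable $y_C$ and output the two constraints
\[
C' := \set{\ell_1, \ell_2, y_C} \quad \text{and} \quad C'' := \set{\ell_3, \ell_4, \lnot y_C} \ \text{.}
\]
So if $\phi$ has $m$ constraints, $\phi'$ has $2m$ constraints, each on three distinct literals. The literals are distinct because the $\ell_i$ are distinct and $y_C$ is fresh. The reduction is clearly polynomial time.

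The heart of the argument is a per-constraint local analysis. Fix an assignment $\psi$ to the original variables, and consider the best choice of $y_C$.

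\textbf{Case 1: $\psi$ NAE-satisfies $C$.} We claim some choice of $y_C$ satisfies both $C'$ and $C''$.
\begin{itemize}
\item If $\ell_1 \neq \ell_2$ under $\psi$, then $C'$ holds for either value of $y_C$. We pick $y_C$ so that $\lnot y_C$ differs from $\ell_3$ (or note $C''$ holds already if $\ell_3 \neq \ell_4$).
\item The case $\ell_3 \neq \ell_4$ is symmetric.
\item Otherwise $\ell_1 = \ell_2 = a$ and $\ell_3 = \ell_4 = b$, and NAE forces $a \neq b$. Setting $y_C := \lnot a = b$ makes $C'$ equal to $(a, a, \lnot a)$ and $C''$ equal to $(b, b, \lnot b)$, so both are satisfied.
\end{itemize}

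\textbf{Case 2: $\psi$ violates $C$.} Then all four literals equal a common value $a$. For any $y_C$, exactly one of $y_C$ and $\lnot y_C$ equals $a$, so exactly one of $C'$, $C''$ is satisfied.

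Thus, optimizing each $y_C$ independently (each auxiliary variable appears in only its two constraints), the number of satisfied constraints of $\phi'$ is $2s + (m - s) = m + s$, where $s$ is the number of constraints of $\phi$ satisfied by $\psi$. Moreover, any assignment to $\phi'$ satisfies at most this many constraints for its restriction $\psi$. Maximizing over $\psi$ gives
\[
\val(\phi') = \frac{m + m \cdot \val(\phi)}{2m} = \tfrac{1}{2}\bigl(1 + \val(\phi)\bigr) \ \text{.}
\]

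The only delicate step is the case analysis in Case 1, specifically the subcase $\ell_1 = \ell_2$ and $\ell_3 = \ell_4$. This is where the negation on $y_C$ in $C''$ is essential: without it the gadget would fail in that subcase. The remaining concern is checking that the upper bound holds for arbitrary, not just optimal, $y_C$, which follows from the same local count.
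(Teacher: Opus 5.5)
Your proof is correct and matches the paper's: the same splitting gadget $\{\ell_1,\ell_2,s_i\}$, $\{\ell_3,\ell_4,\lnot s_i\}$, the same case analysis for the lower bound, and the same key fact that both halves are satisfied only if the original four-literal constraint is. The only difference is in the upper bound, which you get by directly counting at most $2s + (m-s)$ satisfied constraints, while the paper phrases the same count through a random variable $A$ and Markov's inequality; the content is identical.
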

\begin{proof}
    Let $P$ denote the ``not all equal'' predicate on some number of Boolean inputs.
    Let $\phi(x_1, \ldots, x_n)$ be the input $\NAEEfourSAT$ instance, and assume that $\phi$ has \( m \) constraints $C_1, \ldots, C_m$, each of which is equal to $P(\ell_1, \ell_2, \ell_3, \ell_4)$ for four literals $\ell_1, \ell_2, \ell_3, \ell_4$.
    For $i = 1, \ldots, m$, the reduction introduces the new variable $s_i$, and applies the map
    \[
    C_i := P(\ell_1, \ell_2, \ell_3, \ell_4) \mapsto \set{C_i := P(\ell_1, \ell_2, s_i), C_i'' := P(\ell_3, \ell_4, \lnot s_i)} \ \text{.}
    \]
  It then outputs the $\NAEEthreeSAT$ formula $\phi'$ consisting of the set of $2m$ constraints $\cup_{i=1}^m \set{C_i', C_i''}$.

    It is clear that the reduction runs in polynomial time, and so it remains to show that $\val(\phi') = \frac{1}{2} (1 + \val(\phi))$. 
    We first show that $\val(\phi') \geq \frac{1}{2} (1 + \val(\phi))$.
    Let $\tau \in \bit^n$ be an assignment such that $\val_{\tau}(\phi) = \val(\phi)$.
    We construct an assignment $\tau' \in \bit^{n + s}$ by assigning the variables $x_1, \ldots, x_n$ as in $\tau$, and assigning the variables $s_1, \ldots, s_m$ so that:
    \begin{enumerate}
    \item \label{item:Ci_unsat} If $C_i$ is not satisfied, one of $C_i', C_i''$ is satisfied.
    \item \label{item:Ci_sat} If $C_i$ is satisfied, both of $C_i', C_i''$ are satisfied.
    \end{enumerate}

    \cref{item:Ci_unsat} is possible to achieve, e.g., by assigning $s_i := \lnot \ell_i(\tau)$, where we abuse notation slightly and write $\ell_i(\tau)$ to denote the evaluation of $\ell_i$ on $\tau$.
    \cref{item:Ci_sat} is possible to achieve by considering the following two cases.
    First, if $\ell_3(\tau) \neq \ell_4(\tau)$, then $C_i''$ is satisfied regardless of $s_i$, and so we assign $s_i := \lnot \ell_1(\tau)$ in order to satisfy $C_i'$.
    Second, if $\ell_3(\tau) = \ell_4(\tau)$, then we assign $s_i := \ell_3(\tau)$. We claim that both $C_i'$ and $C_i''$ are satisfied. Indeed, since $\tau$ satisfied $C_i$ and since $s_i =\ell_3(\tau)$, we must have that either $\ell_1(\tau) \neq \ell_3(\tau) = s_i$ or $\ell_2(\tau) \neq \ell_3(\tau) = s_i$. It follows that $C_i'$ is satisfied.
    Additionally, $C_i''$ is satisfied since $\lnot s_i \neq \ell_3(\tau)$.
    We then have that $\val(\phi') \geq \val_{\tau'}(\phi') = (2 \val(\phi) \cdot m + (1 - \val(\phi)) \cdot m)/(2m) = \frac{1}{2} (1 + \val(\phi))$, as needed.

    We now show that $\val(\phi') \leq \frac{1}{2} (1 + \val(\phi))$.
    Let $\tau' \in \bit^{n + s}$ be an assignment such that $\val_{\tau'}(\phi') = \val(\phi')$. 
    Let $A \in \set{0, 1, 2}$ be the random variable defined by sampling uniformly random $i \sim [m]$ and reporting how many of the constraints in the pair $C_i', C_i''$ are satisfied by $\tau'$. Note that $\E[A] = 2 \val_{\tau'}(\phi') = 2 \val(\phi')$.
    We claim that $\val(\phi) \geq \val_{\tau}(\phi) \geq \Pr[A = 2]$, where $\tau \in \bit^n$ is the restriction of $\tau'$ to the variables $x_1, \ldots, x_n$.
    Indeed, if $\tau'$ satisfies a pair of constraints $C_i' = P(\ell_1, \ell_2, s_i), C_i'' = P(\ell_3, \ell_4, \lnot s_i)$ in $\phi'$ for some $i \in [m]$ then $\tau$ satisfies $C_i$ in $\phi$ since $C_i', C_i''$ are both satisfied only if $\ell_1(\tau'), \ell_2(\tau'), \ell_3(\tau'), \ell_4(\tau')$ are not all equal.

    By applying Markov's inequality to the nonnegative random variable $2 - A$, we have that
    \begin{equation} \label{eq:markov-nae3sat}
   \Pr[A = 1] \leq \Pr[A \leq 1] = \Pr[2 - A \geq 1] \leq \E[2 - A] = 2(1 -  \val(\phi')) \ \text{.}
    \end{equation}
    Therefore,
    \begin{align*}
    \val(\phi') &= \frac{1}{2} \E[A] \\
                &= \frac{1}{2} (2 \Pr[A = 2] + \Pr[A = 1]) \\
                &\leq \Pr[A = 2] + 1 - \val(\phi') \\
                &\leq \val(\phi) + 1 - \val(\phi')
                \ \text{,}
    \end{align*} 
    where the first inequality uses \cref{eq:markov-nae3sat} and the second inequality uses the claim that $\val(\phi) \geq \Pr[A = 2]$ above.
    The inequality $\val(\phi') \leq \frac{1}{2} (1 + \val(\phi))$ follows by rearranging.

\end{proof}

We are now able to conclude our main theorem about the $\NP$-hardness of $(15/16 + \eps, 1)$-$\NAEEthreeSAT$ as a corollary.
\begin{proof}[Proof of \cref{thm:nae-e3-sat-hard}]
    Combine \cref{cor:nae-e4-sat-hard,thm:nae-e4-sat-to-nae-e3-sat}.
\end{proof}

\end{document}